\newenvironment{Assumption}[1][]{\ifstrempty{#1}{\mdfsetup{frametitle={\tikz[baseline=(current bounding box.east),outer sep=0pt]
      \node[line width=0pt,anchor=east,rectangle,draw=white,fill=white]
    ;}}
  }{\mdfsetup{frametitle={\tikz[baseline=(current bounding box.east),outer sep=0pt]
      \node[anchor=east,rectangle,draw=white,fill=white]
    {\strut #1};}}}\mdfsetup{innertopmargin=-5pt,linecolor=black,linewidth=0.5pt,topline=true,frametitleaboveskip=\dimexpr-\ht\strutbox\relax,skipabove=\topskip,skipbelow=\topskip}
  \begin{mdframed}[nobreak=false]\relax
  }{\end{mdframed}}
\definecolor{keywordcolor}{RGB}{127,0,85}
\definecolor{background}{rgb}{0.94,0.95,0.96}
\definecolor{preconditionColor}{HTML}{e6ffe0}
\definecolor{postconditionColor}{HTML}{ffffcc}
\newcommand{\lit}[1]{\textbf{\texttt{\textcolor{keywordcolor}{#1}}}}
\newcommand{\prev}{\lit{\ensuremath{\texttt{prev}}}\xspace}
\newcommand{\duration}{\lit{\ensuremath{\texttt{dur}}}\xspace}
\newcommand\synt[1]{\texttt{#1}}
\newcommand{\timevariableEncoding}{\ensuremath{\tau}\xspace}
\newcommand{\preconditions}{\synt{pre}\xspace}
\newcommand{\domain}{\ensuremath{\mathbb{D}}\xspace}
\newcommand{\variables}{\ensuremath{\mathcal{V}}\xspace}
\newcommand{\inputs}{\ensuremath{\mathcal{U}}\xspace}
\newcommand{\outputs}{\ensuremath{\mathcal{Y}}\xspace}
\newcommand{\type}{\ensuremath{\mathtt{dom}}\xspace}
\newcommand{\inputVariable}{\ensuremath{u}\xspace}
\newcommand{\outputVariable}{\ensuremath{y}\xspace}
\newcommand{\variable}{\ensuremath{v}\xspace}
\newcommand{\term}{\synt{a}\xspace}
\newcommand{\logicalexpression}{\synt{e}\xspace}
\newcommand{\constant}{\ensuremath{c}\xspace}
\newcommand{\trace}{\ensuremath{\langle \inputInterpretation, \outputInterpretation, \interpretationtime \rangle}\xspace}
\newcommand{\tracesymbol}{\ensuremath{\pi}\xspace}
\newcommand{\traces}{\ensuremath{\Pi}\xspace}
\newcommand\interpretation[1]{\ensuremath{\llbracket #1 \rrbracket}}
\newcommand{\codespace}{\hspace{1cm}}
\newcommand{\RTtoStateflow}{\textsc{RT2Stateflow}\xspace}
\newcommand{\RT}{\textsc{RT}\xspace}
\newcommand{\nonnegativenatural}{\ensuremath{\mathbb{N}^0}\xspace}
\newcommand{\real}{\ensuremath{\mathbb{R}}\xspace}
\newcommand{\positivenatural}{\ensuremath{\mathbb{N}^+}\xspace}
\newcommand{\durationcolumn}{\synt{d}\xspace}
\newcommand{\postconditions}{\synt{post}\xspace}
\newcommand{\inputInterpretation}{\ensuremath{\iota_\inputs}\xspace}
\newcommand{\outputInterpretation}{\ensuremath{\iota_\outputs}\xspace}
\newcommand{\indexvariable}{\ensuremath{\texttt{i}}\xspace}
\newcommand{\interpretationtime}{\ensuremath{\iota_\tau}\xspace}
\newcommand{\requirementTable}{\synt{rt}\xspace}
\newcommand{\sampleTime}{\ensuremath{\textsc{T}_\textsc{s}}\xspace}
\newcommand{\requirement}{\synt{req}\xspace}
\newcommand{\simulink}{Simulink\textsuperscript{\tiny\textregistered}\xspace}
\newcommand{\SLTest}{Simulink\textsuperscript{\tiny\textregistered}
Test\textsuperscript{\tiny TM}\xspace}
\newcommand{\matlab}{MATLAB\textsuperscript{\tiny\textregistered}\xspace}
\newcommand{\failingtestsequence}{\textsc{TC}}
\newcommand{\nff}{\textsc{NFF}\xspace}
\newcommand{\cmark}{\ding{51}}  \newcommand{\xmark}{\ding{55}}  \newcommand{\ReqTbx}{Requirements Toolbox\textsuperscript{\tiny TM}\xspace}
\newcommand\modelrtcombinations{\ensuremath{60}\xspace}
\newcommand\rqonetotal{\ensuremath{85\%}\xspace}
\newcommand{\nb}[2]{
  \fcolorbox{black}{yellow}{\bfseries\sffamily\scriptsize#1}
  {\sf\small$\blacktriangleright$\textit{#2}$\blacktriangleleft$}
 }
\newcommand{\nb}[2]{}
\newcommand\phase[1]{\tikz[baseline=(X.base)]\node [draw=black,fill=white,thick,rectangle,inner sep=2pt, rounded corners=2pt](X){\color{black}\textbf{#1}};}
\def\BibTeX{{\rm B\kern-.05em{\sc i\kern-.025em b}\kern-.08em
    T\kern-.1667em\lower.7ex\hbox{E}\kern-.125emX}}
\renewcommand{\Function}[2]{\csname ALG@cmd@\ALG@L @Function\endcsname{#1}{#2}\def\jayden@currentfunction{#1}}
\newcommand{\funclabel}[1]{\@bsphack
  \protected@write\@auxout{}{\string\newlabel{#1}{{\jayden@currentfunction}{\thepage}}}\@esphack
}
\theoremstyle{plain}
\newtheorem{theorem}{Theorem}[section]
\theoremstyle{definition}
\newtheorem{definition}[theorem]{Definition}
\crefname{definition}{definition}{definitions}
\Crefname{definition}{Definition}{Definitions}
\theoremstyle{plain}
\def\ps@IEEEtitlepagestyle{
    \def\@oddfoot{\mycopyrightnotice}
    \def\@evenfoot{}
}
\def\mycopyrightnotice{
    {\footnotesize
        \begin{minipage}{\textwidth}
            \centering
            \textcopyright~{\it ``This work has been submitted to the IEEE for possible publication. Copyright may be transferred without notice, after which this version may no longer be accessible.''}
        \end{minipage}
    }
}
\begin{document}

\title{Search-based Testing of Simulink Models with Requirements Tables}

\author{Federico~Formica,
Chris~George,
Shayda~Rahmatyan,
Vera~Pantelic,
Mark~Lawford,
Angelo~Gargantini,
Claudio~Menghi
\thanks{
F.~Formica, C.~George, S.~Rahmatyan, V.~Pantelic, and M.~Lawford are with the McMaster University, Hamilton, Canada -
e-mail: \{formicaf, georgc9, rahmats, pantelv, lawford\}@mcmaster.ca\newline
A.~Gargantini is with University of Bergamo, Bergamo, Italy -
email: angelo.gargantini@unibg.it\newline
C. Menghi is with University of Bergamo, Bergamo, Italy and McMaster University, Hamilton, Canada -
e-mail: claudio.menghi@unibg.it}
}

\maketitle              

\begin{abstract}
Search-based software testing (SBST) of Simulink models helps to find scenarios that demonstrate that the system can reach a state that violates one of its requirements.
However, many SBST techniques for \simulink models rely on requirements expressed in logical languages often not supported in industrial tools.
Integrating SBST methods with the tools already existing within Simulink for specifying requirements can increase the methods' practical adoption. 
This work presents the first black-box approach for testing Simulink models that supports Requirements Table (RT), a tool from the Simulink Requirements Toolbox used by practitioners to specify software requirements.

We evaluated our solution by considering \modelrtcombinations model-RT combinations consisting of a model and an RT. 
Our SBST framework returned a failure-revealing test case for \rqonetotal of the model-RT combinations.
Remarkably, it identified a failure-revealing test case for three model-RT combinations for a cruise controller of an industrial simulator that other previously used tools could not find.
The efficiency of our SBST solution is acceptable for practical applications and comparable to existing SBST tools that are not based on RT.
\end{abstract}

\begin{IEEEkeywords}
Testing, Falsification, Simulink, Requirements Table.
\end{IEEEkeywords}

\newcounter{commentnumber}

\newcommand\rwcomment[2]{~\\ \noindent{\bf   \refstepcounter{commentnumber}\label{#1}\comm{C\thecommentnumber} }   \emph{#2}}

\newcommand\response[1]{~\\\noindent \textbf{Response:} #1}

\newcommand\referenceaddressed[1]{\ifthenelse{\equal{\ref{#1}}{\ref{#1end}}}{page~\pageref{#1} line~\ref{#1}}{page~\pageref{#1} lines~\ref{#1}---\ref{#1end}}}

\newcommand\addcitation[3]{\linelabel{#2} \textcolor{blue}{\cite{#1}} \comm{C\ref{#3}} \linelabel{#2end}  }

\newcommand\deletecitation[3]{\linelabel{#2} \textcolor{blue}{\sout{\mbox{\cite{#1}}}} \comm{C\ref{#3}} \linelabel{#2end}  }

\newcommand\delete[3]{\linelabel{#2}\textcolor{blue}{\sout{#1}} \comm{C\ref{#3}} \linelabel{#2end}  }

\newcommand\change[3]{\linelabel{#2}\textcolor{blue}{#1} \comm{C\ref{#3}} \linelabel{#2end}  }

\newcommand\changeRead[3]{\linelabel{#2}\textcolor{gray}{#1} \comm{C\ref{#3}} \linelabel{#2end}  }

\newcommand\rep[4]{\linelabel{#3}\textcolor{blue}{#1} \textcolor{blue}{\sout{#2}} \comm{C\ref{#4}} \linelabel{#3end}  }

\newcommand\textreference[3]{\linelabel{#2} #1 \comm{C\ref{#3}} \linelabel{#2end}  }

\newcommand{\comm}[1]{\fcolorbox{black}{yellow}{\bfseries\sffamily\scriptsize#1}}

\renewcommand\addcitation[3]{\cite{#1}}
\renewcommand\deletecitation[3]{}
\renewcommand\delete[3]{}
\renewcommand\rep[4]{#1}
\renewcommand\change[3]{#1}
\renewcommand\changeRead[3]{#1}
\renewcommand\textreference[3]{}
 
\definecolor{blue}{RGB}{0,0,0}

\section{Introduction}
\label{sec:intro}

\emph{Search-based software testing} (SBST) automatically generates test cases searching for violations of software requirements~\cite{harman2001search}. 
SBST is widely used in many domains, such as real-time, concurrent, distributed, embedded, safety-critical, and cyber-physical systems (CPS)~\cite{5210118,stocco2023model,liu2019effective,9869302,jahangirova2021quality,arrieta2016test}. 
SBST is also applied within \simulink~\cite{Simulink}, a widely used tool to design software~\cite{elberzhager2013analysis,DBLP:journals/sosym/LiebelMTLH18} used in many industries, such as medical, avionics, and automotive~\cite{dajsuren2013simulink,boll2021characteristics}.
Developing SBST techniques for Simulink models is a significant and widely recognized software engineering problem~\cite{Aristeo,menghi2019generating,kapinski2016simulation} of interest to both academia and industry. 
However, the existing SBST techniques for Simulink models (e.g.,~\cite{Aristeo,menghi2019generating}) mostly consider requirements expressed in a temporal logical format hampering a wider adoption of these techniques in practice.
Indeed, temporal logical specifications are difficult to write even for expert logicians (e.g.,~\cite{Greeman_2025,dokhanchi2015metric,autili2007graphical,rozier2016specification,maoz2018software,menghi2019specification,dwyer1999patterns}).

Recent work proposed an SBST technique that considers Test Sequence and Test Assessment blocks from \SLTest~\cite{SimulinkTest} to generate failure-revealing test cases~\cite{Hecate2024}. 
The solution was evaluated considering 18 Simulink models from different domains and industries. 
It also assessed the effectiveness of this technique by developing a cruise controller model powered by an industrial simulator~\cite{10.1145/3611643.3613894}.
Despite its advantages, this technique requires engineers to translate requirements into Simulink Test Assessment blocks.
Proposing solutions that \emph{directly consider software requirements} to generate failure-revealing test cases would make SBST more accessible and save development time. 
It will also prevent engineers from designing erroneous Test Assessment blocks for the software requirements.

\change{Requirements Tables (RTs)~\cite{RequirementsTable,menghi2024completeness} from the \ReqTbx~\cite{RequirementsToolbox} are a powerful tool to specify requirements as part of the Simulink model. 
RTs are a specific form of \emph{tabular requirements specification} (e.g.~\cite{heninger1980specifying,faulk1990state,meyers1983software}), i.e., requirements specifications in a table format. 
Tabular requirements have been extensively studied by the academic community (e.g.,~\cite{heitmeyer1996automated,DBLP:conf/ftscs/PangWLW13,singh2017use,peters2007ide,eles2011tabular,menghi2024completeness}). They were used to specify the software requirements of industrial systems (e.g.,~\cite{crow1998formalizing,10.1145/2038642.2038676}) including the space shuttle~\cite{crow1998formalizing}, the shutdown system of the Darlington Canadian Nuclear Generation Station~\cite{10.1145/2038642.2038676}, the US Navy's A-7 aircraft~\cite{heninger1980specifying}, and an Ericsson telecom software system~\cite{quinn2006specification}.}{tech_difficulties_stateflowAnswer}{tech_difficulties_stateflow}

This work proposes Requirements Table Driven SBST, a framework that generates test cases from RT.
\change{Our framework relies on a quantitative semantics of RT, which is one of the contributions of this work.}{QuantSemanticsContribution}{small_contribution_1}

Our solution compiles RT into Simulink Stateflow~\cite{SimulinkStateflow} blocks, providing a fitness measure to guide the search process.
Stateflow is Mathworks' framework for specifying state machines.
We implemented our solution within HECATE~\cite{Hecate2024}, a testing tool that supports Test Sequence and Test Assessment blocks from \SLTest.

We assessed the effectiveness and efficiency of our solution using three different Simulink models: the Automatic Transmission Controller~\cite{ARCH14} from the  Applied Verification for Continuous and Hybrid Systems
(ARCH)~\cite{Arch2024}, 
the model of a Cruise Controller (CC) for an industrial simulator~\cite{10.1145/3611643.3613894}, and a model from the Simulink documentation~\cite{RequirementsTable}.
For each model, we manually define variations of the models and their requirement set specified in an RT.
We then pair each model version with each version of the RT for the same model and apply our solution.
This process results in testing \modelrtcombinations different model-RT combinations over the three starting models.
Our SBST framework could effectively find failure-revealing test cases for \rqonetotal of the model-RT combinations.
Remarkably, it identified a failure-revealing test case for three model-RT combinations from the CC example that the tool used in the original publication failed to identify.
We justified why our SBST could not reveal failures for $15\%$ of our model-RT combinations.
Also, our SBST framework was efficient in finding failure-revealing test cases: on average (across all the model-RT combinations for each Simulink model), our SBST framework required respectively $73.4$, $6.6$, and $24.1$ iterations,  achieving performance comparable to other SBST frameworks not based on RTs.
We also informally compare our SBST and other solutions supporting other specification languages, and justify why an experimental comparison is not relevant for this work.
Finally, our solution supports different search algorithms. While the implementation of these algorithms is not part of our contribution, \changeRead{we do provide a comparison between the effectiveness of Uniform Random and Simulated Annealing search procedures.}{search_methodAnswerZ}{search_method}

\change{To summarize, our contributions are as follows: 
\begin{itemize}
\item[\textbf{C1}:] We propose Stateflow Charts as the target tool to encode our fitness metric --- after assessing different alternatives (e.g., logic-based and programming languages). 
     Stateflow Charts enable the development of a translation (a)~fully integrated within the \simulink environment, (b)~with a clean and simple formalization that simplifies its presentation, description, and the verification of its soundness, and (c)~producing monitors that assess the requirements as the simulation is running (\Cref{sec:hecate}).
    \item[\textbf{C2}:]  We define a quantitative semantics for the satisfaction of RT and their requirements, and we propose its use as a fitness metric to guide the search-based process. 
    The quantitative semantics extends the two-valued semantics of the RT~\cite{menghi2024completeness} with a quantitative measure for the satisfaction and violation of the requirements (\Cref{sec:Semantics}). 
    \item[\textbf{C3}:] We define a translation to convert RTs into fitness functions represented as Stateflow Charts (\Cref{sec:transl}). 
    Our translation overcomes several challenges: (a)~The encoding of the multiple requirements of the RTs, which must be checked simultaneously (\Cref{sec:algorithm}),
    (b)~the encoding of the \prev (\Cref{sec:dur}) and \duration (\Cref{sec:prev}) operators that can be used to express the preconditions and the postconditions of the requirements, which are not available within Stateflow Charts, and (c)~the definition of a proof that our translation leads to a fitness function that satisfies a set of well-formedness properties (\Cref{sec:soundness}).
    \item[\textbf{C4}:] We implemented our solution as a plugin for HECATE~\cite{Hecate2024} since it is the only testing approach supporting the Test Sequences for test case generation. We decided to consider Test Sequence blocks~\cite{TestSequenceBasics} since they are widely used within the  \simulink environment to specify test cases~\cite{formica2022search}.
    \item[\textbf{C5}:] We defined a benchmark of RT for SBST: since RTs were created only recently (released at the beginning of 2022 \cite{RequirementsTable}), a benchmark dedicated to SBST does not exist.
    Our benchmark contains \modelrtcombinations model-RT combinations from three practical examples (\Cref{sec:benchmark}).
     We provided a complete replication package including our tool, the benchmark RT, and scripts for replicating our experiments~\cite{ReplicationPackage}.
    \item[\textbf{C6}:] We empirically evaluated the effectiveness (\Cref{sec:effectiveness}), i.e., the capability of generating failure-revealing test cases, and efficiency (\Cref{sec:efficiency}), i.e., the number of iterations required to generate the failure-revealing test cases, of our solution.
    Our evaluation required us to study and understand the three practical examples.
    For each practical example, we critically analyzed the results obtained for the corresponding model-RT combinations.
    For one of them, we verified the obtained results via hardware-in-the-loop testing, which required a considerable effort to set up the hardware testing platform.
    \item[\textbf{C7}:] We discuss the relevance of our contribution and its industrial applicability (\Cref{sec:relevance}), present threats to validity (\Cref{sec:threats}), and compare of our solution with existing ones from the literature (\Cref{sec:comparison}).
\end{itemize}
}{small_contribution_1AnswerCont}{small_contribution_1}\change{}{small_contribution_2AnswerCont}{small_contribution_2}
\change{}{}{small_contribution_3}

This paper is organized as follows.
\Cref{sec:background} presents background notions by introducing RT, Simulink, and Simulink Stateflow.
\Cref{sec:hecate} presents our SBST framework.
\change{\Cref{sec:Semantics} defines a quantitative semantics for RT}{paperStructure}{algorithm_completeness}
\Cref{sec:transl} describes the translation of RT into a fitness function. 
\Cref{sec:eval} evaluates our solution.
\Cref{sec:related} presents the related work.
\Cref{sec:conclusion} presents our conclusions. 
\begin{figure*}[t!]
    \centering
    \begin{subfigure}[b]{0.7\textwidth}
        \centering
        \includegraphics[width=1\linewidth]{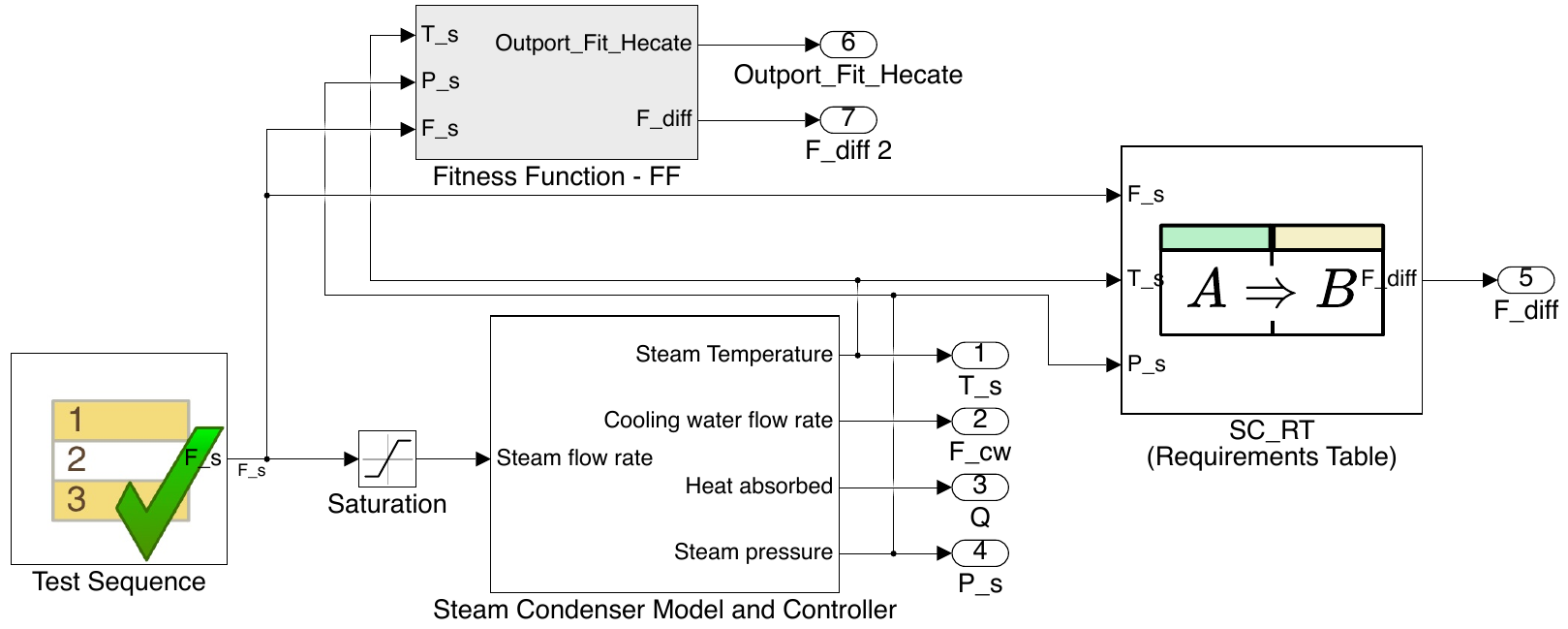}
        \caption{The Simulink model of the Steam Condenser (SC) and its controller.}
        \label{fig:exampleModel}
    \end{subfigure}\\
    \begin{subfigure}[b]{0.7\textwidth}
        \centering
        \includegraphics[width=1\linewidth]{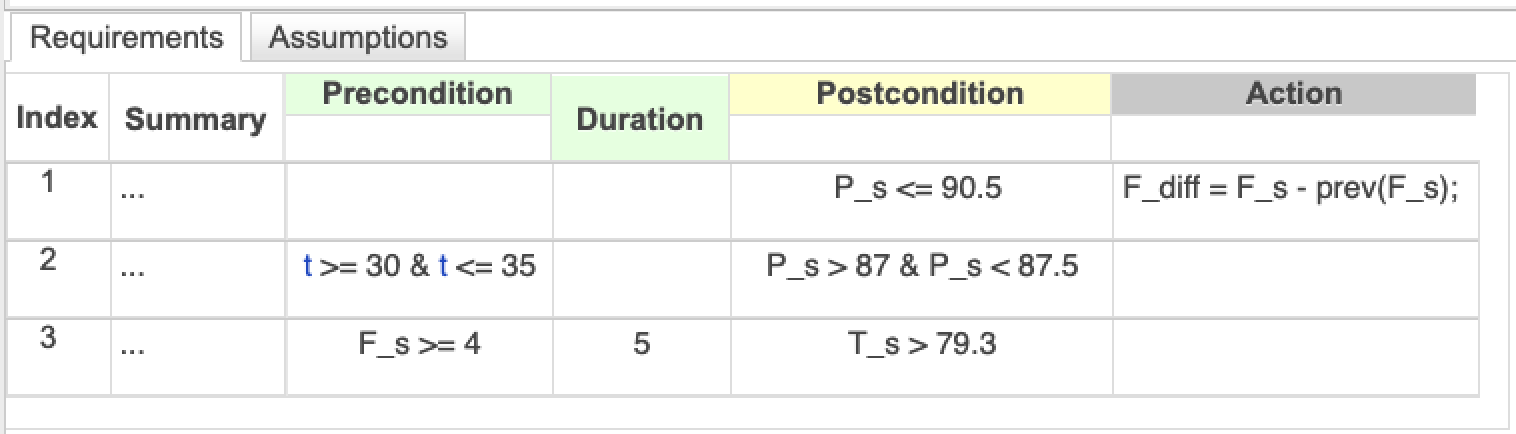}
        \caption{The Requirements Table.}
        \label{fig:exampleRT}
    \end{subfigure}\\
    \begin{subfigure}[b]{0.64\textwidth}
        \centering
        \includegraphics[width=1\linewidth]{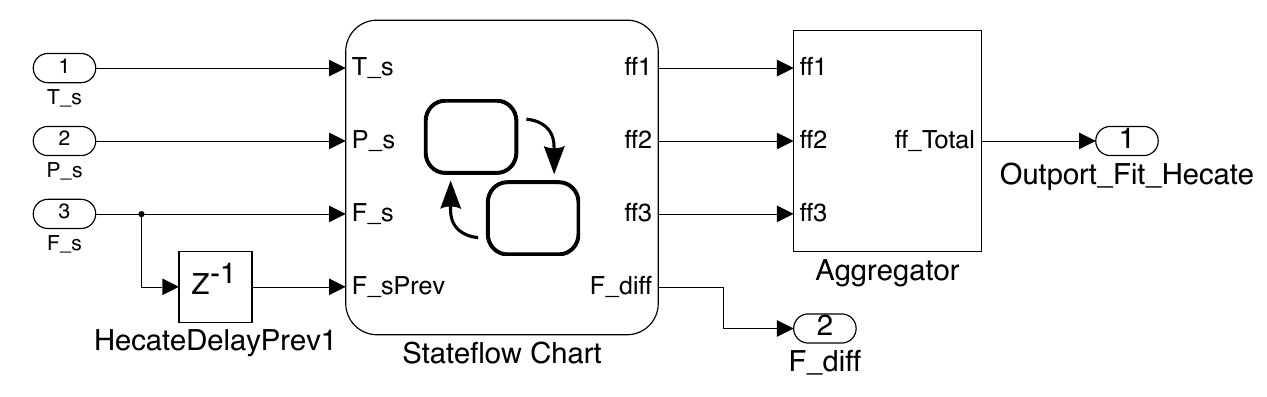}
        \caption{Fitness Function (FF) subsystem}
        \label{fig:FitCalculator}
        \end{subfigure}
    \begin{subfigure}[b]{0.34\textwidth}
        \centering
        \includegraphics[width=0.9\linewidth]{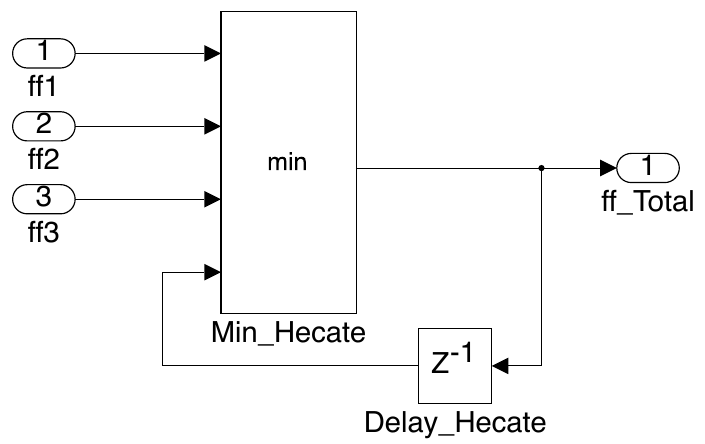}
        \caption{Fitness function Aggregator.}
        \label{fig:FitConverter}    
        \end{subfigure}\\
    \begin{subfigure}[b]{0.7\textwidth}
        \centering
        \includegraphics[width=1.0\linewidth]{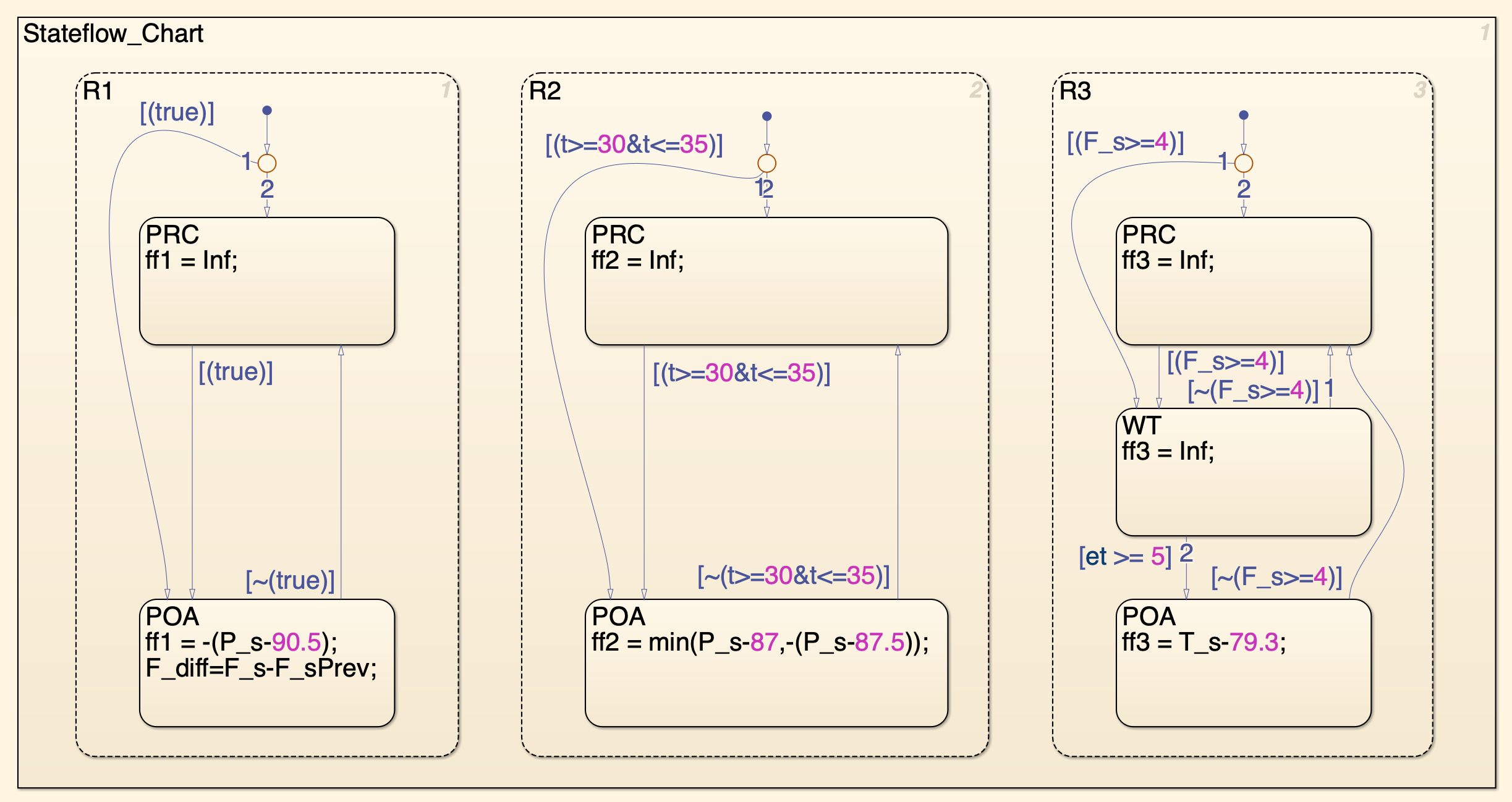}
        \caption{Example of a Stateflow model.}
        \label{fig:exampleSF}
    \end{subfigure}
    \caption{The Steam Condenser running example.}
    \label{fig:example}
\end{figure*}

\section{Simulink, Requirements Tables,  and Simulink Stateflow}
\label{sec:background}
We introduce  
Simulink (\Cref{sec:Simulink}), Requirements Tables (\Cref{sec:reqtable}), 
 and Stateflow (\Cref{sec:Stateflow}).

\subsection{Simulink}
\label{sec:Simulink}
Simulink~\cite{Simulink} is a modeling language for software design based on signals.
For example, \Cref{fig:exampleModel} presents the steam condenser (SC)~\cite{YaghoubiHSCC} example, which is used as a benchmark model for the ARCH-COMP annual  competition~\cite{Arch2024}.
The model has one input, the steam flow rate ($\mathit{F\_s}$), and four outputs: the steam temperature ($\mathit{T\_s}$), the cooling water flow rate ($\mathit{F\_cw}$), the heat absorbed ($Q$) and the steam pressure ($\mathit{P\_s}$).
The SC software controller regulates the cooling water flow rate to maintain steam temperature and pressure within acceptable levels.
Simulink models consist of components, represented by Simulink blocks, linked by connections.
The SC model has $173$ blocks.
For example, the steam flow rate ($\mathit{F\_s}$) input is linked to the \emph{Saturation} component, and the output of the \emph{Saturation} component is linked to the \emph{Steam Condenser Model and Controller}.
A component can represent a subsystem and be refined using other components. 
For example, the \emph{Steam Condenser Model and Controller} contains the model of the steam condenser and its controller.
The Simulink model also embeds the \emph{Test Sequence} and the \emph{Requirements Table} blocks.
The \emph{Test Sequence} block is used for test case specification and is detailed below.
The \emph{Requirements Table} block (SC\_RT) is used for requirements specification and is described in \Cref{sec:reqtable}.
The model also uses \emph{Inport} and \emph{Outport} blocks to transmit signals into and from subsystems.
For example, the outport \emph{Outport\_Fit\_Hecate} in \Cref{fig:FitCalculator} is used to get the signal \emph{ff\_Total} outside of the \emph{Fitness Function} subcomponent to the root level of the model.
Outports on the root level of the model send the signals to the \matlab workspace. 
Our Simulink model also embeds the fitness function (FF) in an appropriate Simulink component.
As we will describe in \Cref{sec:hecate}, this Simulink component is generated by our SBST framework and will guide the search-based testing framework.
The \emph{Fitness Function} component is detailed in \Cref{fig:FitCalculator} and consists of two subcomponents (a \emph{Stateflow Chart} and an \emph{Aggregator}).
The \emph{Aggregator} component is detailed in \Cref{fig:FitConverter}.
These two components will be described in \Cref{sec:transl}.

\begin{figure*}[t]
    \centering
    \hfill
    \begin{subfigure}{0.47\textwidth}
        \includegraphics[width=0.8\textwidth]{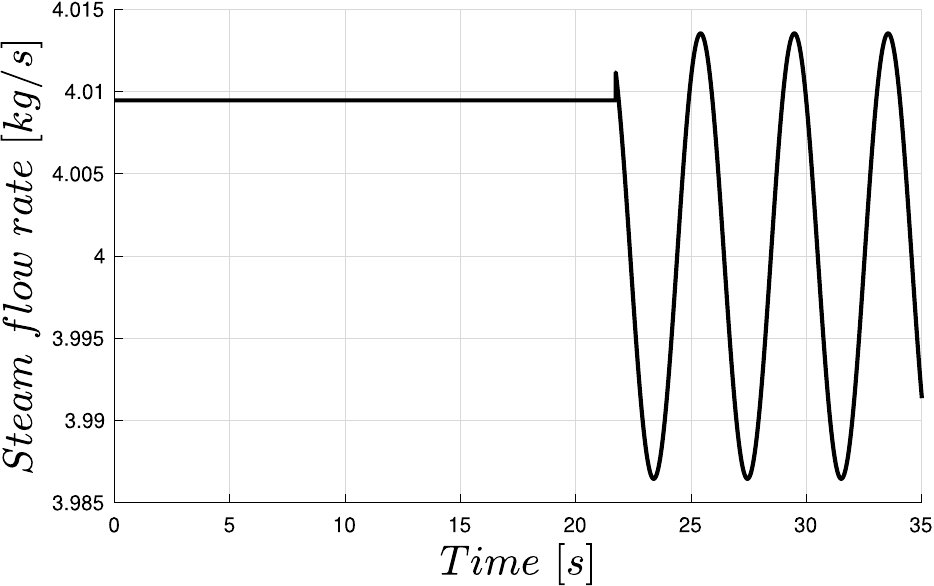}
        \caption{Inputs}
        \label{fig:in}
    \end{subfigure}
    \hfill
    \begin{subfigure}{0.47\textwidth}
        \includegraphics[width=0.8\textwidth]{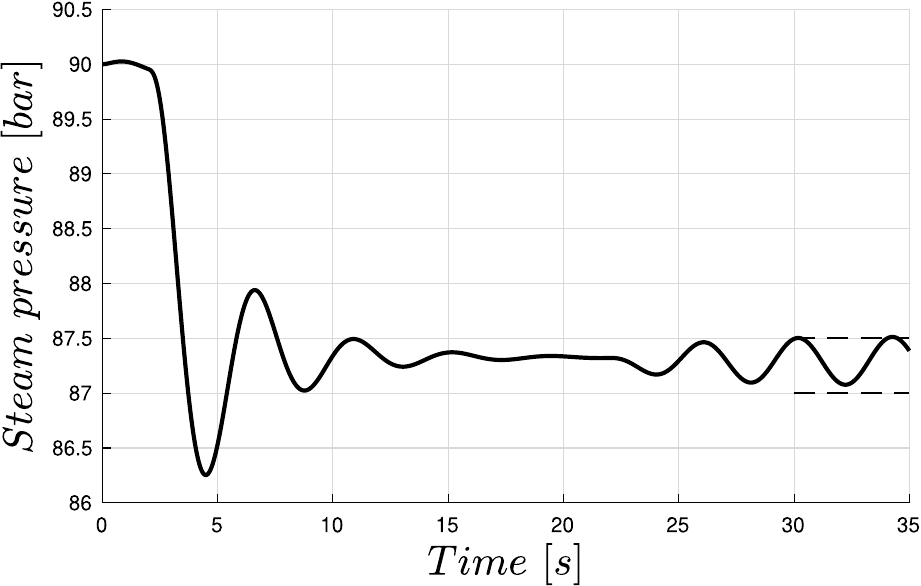}
        \caption{Outputs}
        \label{fig:out}
    \end{subfigure}
    \hfill
    \caption{An example of Input/Output signals of the Steam Condenser (SC) Simulink model.}
    \label{fig:signals}
\end{figure*}

The Simulink simulator considers a set of input signals (one for each input of the model) and it produces a set of output signals (one for each output of the model) that depend on the behavior of the software components.
In this work, we assume Simulink Test Sequences are used to generate input signals.
The Test Sequence blocks are supported by \SLTest. 
They are a standard mechanism for specifying test input in Simulink.
They provide a graphical notation to specify a test input containing one input signal for each model input. 
However, although they are used for test input specification in this work, their details are not relevant to our work: The interested reader can refer to online resources for additional information~\cite{Hecate2024}.
For example, \Cref{fig:in} shows a test input for the SC example generated by a Test Sequence showing how the steam flow rate changes over the time domain $[0, 35]$s.
\Cref{fig:out} plots one of the output signals produced by the Simulink simulator showing how the steam pressure changes over time for the considered test input.

\begin{figure}
    \footnotesize
\begin{align}
&    \term  &&::= && \constant \mid \variable \mid \prev(\variable) \mid \term_1 \odot \term_2  &\nonumber\\
&     \logicalexpression &&::=&& \term_1\oplus \term_2 \mid \neg \logicalexpression \mid \duration(\logicalexpression) \geq c \mid  \logicalexpression_1 \oslash \logicalexpression_2 & \nonumber \\
&      \preconditions && :: =&& \logicalexpression &\nonumber \\
&    \postconditions && :: =&& \logicalexpression& \nonumber \\
&     \requirement && ::= && \preconditions  \Rightarrow \postconditions& \nonumber \\
&     \requirementTable && ::= && \requirement \mid \requirement, \requirementTable & \nonumber 
    \end{align}
    $c\in \real$, $\variable \in \variables$,
    $\odot \in \{+, -, * , /\}$,
$\oplus \in \{>, <, \leq, \geq, =,\not=\}$, $\oslash \in \{\&, |, \Rightarrow\}$.

     \caption{Requirements Tables: Syntax}
    \label{fig:syntax}
\end{figure}

\subsection{Requirements Tables}
\label{sec:reqtable}
Engineers can use Requirements Tables (RTs)~\cite{ReqTableDocumentation,RequirementsTable,menghi2024completeness} to specify the requirements of Simulink models.
RTs are directly embedded into the model.
For example, the RT for the SC is embedded within the Simulink model in \Cref{fig:exampleModel}.
The input of the RT from \Cref{fig:exampleModel} takes the steam flow rate ($\mathit{F\_s}$), the steam temperature ($\mathit{T\_s}$), and the steam pressure ($\mathit{P\_s}$) as inputs; its output is the difference in the flow rate ($\mathit{F\_diff}$). 
For example, the RT for the SC from \Cref{fig:exampleModel} is detailed in \Cref{fig:exampleRT}.

\change{\Cref{fig:syntax} presents the formal syntax of RT~\cite{menghi2024completeness}.}{RT_syntax}{small_contribution_1} In practice, RTs are enriched with graphical coloring, an index to identify the different requirements, a summary, the graphical separation of pre- and post-conditions, and the duration column. 

The RT from \Cref{fig:exampleRT} consists of three requirements associated with the indexes (identifiers) 1, 2, and 3: Each row of the table details one requirement for the SC.
\change{This is represented in the grammar from \Cref{fig:syntax} by defining an RT (\requirementTable) as a requirement~(\requirement) or a requirement~(\requirement) concatenated with a Requirements Table~(\requirementTable).}{RequirementRows}{small_contribution_1}
Requirements 1 and 3 are two requirements introduced for illustration purposes, while Requirement 2 represents the only requirement considered in the ARCH-COMP competition.

The preconditions are conditions on the inputs of the model and the variable $t$ used to encode the simulation time.
For example, the precondition for the requirements~2 and~3 from \Cref{fig:exampleRT} respectively assesses if the simulation time (t) is between $30$ and $35$ seconds (the boundary values $30$ and $35$ are included) and if the steam flow rate ($\mathit{F\_s}$) is greater than or equal to four.
When the precondition of a requirement is satisfied, the postcondition should be satisfied. 
For example, when the precondition of the second requirement is satisfied,  the value of the steam pressure should be between $87$ and $87.5$ (the boundary values $87$ and $87.5$ are excluded).
Preconditions \preconditions and postconditions \postconditions are logical expressions.
The duration column~\cite{ReqTableDocumentationDuration} of the RT enables engineers to specify a duration for which the precondition should be satisfied before the postcondition is evaluated. 
For example, the RT from \Cref{fig:exampleRT} specifies that the precondition of Requirement~3 should be satisfied for at least five seconds before the system has to satisfy the postcondition on its output signals.
\change{The grammar from \Cref{fig:syntax} defines a requirement \requirement as an expression in the form $\preconditions[\durationcolumn]  \Rightarrow \postconditions$ that relates a precondition \preconditions with a postcondition \postconditions.
The optional duration column is encoded by the term \durationcolumn. 
The duration column expresses the time (in seconds) during which the precondition must be true before evaluating the rest of the requirement.}{DurationColumn}{small_contribution_1}

The pre- and post-conditions consist of Boolean operators among expressions.
For example, the postcondition $\mathit{P\_s} > 87~\&~\mathit{P\_s} < 87.5$ is obtained by applying the Boolean operator ``\&'' to the expressions $\mathit{P\_s} > 87$ and $\mathit{P\_s} < 87.5$.
\change{Formally, the grammar from \Cref{fig:syntax} defines a logical expression \logicalexpression as either 
  the combination $\term_1\oplus \term_2$ of two terms $\term_1$ and $\term_2$ with a relational operator $\oplus \in \{>, <,\leq,\geq, =, \not=\}$, 
  the negation $\neg \logicalexpression$ of a logical expression $\logicalexpression$, 
  the duration term $\duration(\logicalexpression) \geq \constant$ indicating that a logical expression $\logicalexpression$ holds for at least $\constant$ seconds, 
  or the combination  $\logicalexpression_1 \oslash \logicalexpression_2$ of two logical expressions obtained by considering the Boolean operators $\oslash \in \{\&, |, \Rightarrow\}$.}{RelationalOperators}{small_contribution_1}
For example, the atom $\mathit{P\_s} > 87$ of the postcondition $\mathit{P\_s} > 87~\&~\mathit{P\_s} < 87.5$ relates two expressions respectively representing the variable $\mathit{P\_s}$ and the constant $87$.

Even though this is not the case for our running example, each arithmetic expression can use arithmetic operators (i.e., $\term_1 \odot \term_2$ with $\odot \in \{+, -, *, /\}$).
For example, the atom $2+\mathit{P\_s} > 87$  could be used as an atom for the postcondition.
RT also enables engineers to use the \prev (a.k.a 
\emph{getPrevious}) operator~\cite{Prev}.
This operator returns the value of the data at the previous time step.
For example, the value of $\prev(\mathit{F\_s})$ within the action column of the RT represents the value of $\mathit{F\_s}$ in its previous simulation time.
Note that, starting from Simulink R2023a, the initial values of inputs and outputs must be specified to be used as the previous values for the beginning of the simulation to avoid undefined behavior.
\change{Formally, the grammar from \Cref{fig:syntax}  defines a \emph{term} \term as either a real constant (i.e., \constant), an input or output variable (i.e., \variable), the term $\prev(\inputVariable)$ indicating the previous value of the input variable \inputVariable, 
 or a combination $\term_1 \odot \term_2$ of two terms $\term_1$, $\term_2$ with an arithmetic operator $\odot \in \{+,-,*,/\}$.}{TermDefinition}{small_contribution_1}

RTs can specify actions in their action columns. 
Actions are operations that should be executed when their precondition is satisfied.
For example, the action $\mathit{F\_diff}=\mathit{F\_s}-\mathit{prev}(\mathit{F\_s})$ sets the value of the output signal $\mathit{F\_diff}$ to $\mathit{F\_s}-\mathit{prev}(\mathit{F\_s})$ when the precondition of Requirement~1 is satisfied. 
Since the precondition of  Requirement~1 is always satisfied, the action sets the value of the variable $\mathit{F\_diff}$ at every simulation time. 
Otherwise, there would exist a simulation time for which no action specifies a value of $\mathit{F\_diff}$, and the simulation would stop with an error.
\change{The grammar from \Cref{fig:syntax}  does not account for actions since they can be considered as a part of the post-condition, i.e., the post-condition of the RT can include the conjunction of the content of the Postcondition and the Action columns from \Cref{fig:exampleRT}.}{ActionColumn}{small_contribution_1}

Note that several preconditions can be satisfied simultaneously, forcing the satisfaction of the postcondition of the corresponding requirements.
For example, for the RT from \Cref{fig:exampleRT} the postcondition of Requirement~1 should always be satisfied since the requirement comes with no precondition.
The postconditions of Requirements~2 and 3 should be satisfied when the condition of the corresponding preconditions holds.

\rep{The interested reader can refer to a recent work~\cite{menghi2024completeness} for additional details on the syntax of RTs.}{A recent work~\cite{menghi2024completeness} defines the formal semantics of RT, which provides additional details for the interested reader.}{RephrasingCitation}{small_contribution_1}

\subsection{Stateflow}
\label{sec:Stateflow}
Stateflow~\cite{Stateflow} is a graphical language used to model state transition diagrams, such as state machines.
\Cref{fig:exampleSF} presents an example of a state machine.
This section does not exhaustively describe all the constructs that can be used to represent state machines but introduces only the ones that will be used by our SBST solution.
We will first describe the state machine R2 in \Cref{fig:exampleSF}. 
Then, we will describe its relationship with the other state machines (i.e., R1 and R3).

A state machine consists of a (finite) set of states and transitions that specify how the system moves from one state to another in response to events. 
For example, the box labeled R2 in \Cref{fig:exampleSF} represents a state machine consisting of two states  PRC and POA.

Links ended by arrows represent transitions that change the state of the Stateflow.
The direction of the arrow indicates the destination of the transition.
For example, if the system is in state PRC, the satisfaction of the condition $t>=30~\&~t<=35$ triggers an event that moves the system from state PRC to POA.
Note that the conditions that label transitions can also use the value of the reserved variable named $et$ that represents the elapsed time, i.e., the length of time that has elapsed since the associated state was entered.
For example, the condition $\mathit{et}>=5$ that labels the transition from state WT to POA of the state machine R3 is triggered
if the elapsed time since the WT was entered is greater than or equal to $5$.

States are labeled with actions that change the values of variables. 
For example, the action $\mathit{ff2}=\mathit{Inf}$ sets the values of the variable $\mathit{ff2}$ to the value $\mathit{Inf}$ (infinite).
Note that when a Stateflow fires a transition reaching its destination state, it also executes its action.

The blue-colored circle is the entry point of the state machine, i.e., when the system starts it enters the blue-colored circle.
For example, the blue-colored circle connected to the red empty circle from \Cref{fig:exampleSF} indicates the starting point of the finite state machine.

The red empty circle is a connective junction representing decision points, i.e., \texttt{if/else} conditions.
For example, if the condition $t>=30~\&~t<=35$ in R2 from \Cref{fig:exampleSF} is satisfied, the system moves to the POA state, otherwise it moves to the PRC state.
Therefore, depending on the initial value assumed by the variable $t$, the system will start from the PRC or POA state.

States can be hierarchically decomposed via exclusive and parallel states.
This work considers parallel states representing independent operations executed in parallel.
For example, the RT state of the Stateflow from \Cref{fig:exampleSF} is decomposed into three parallel states: R1, R2, and R3 which are executed in parallel.

\Cref{sec:hecate} presents an overview of our SBST framework.
\Cref{sec:Semantics} presents the quantitative semantics of RT that is used to drive the search process.
\Cref{sec:transl} will describe how RTs are converted into Stateflow models (that reflect their quantitative semantics) and compute the fitness function used by our SBST framework.

 \section{Requirements Table Driven SBST}
\label{sec:hecate}

\tikzstyle{output} = [coordinate]
\begin{figure}[t]
\centering
\begin{tikzpicture}[auto,
 block/.style ={rectangle, draw=black, thick, fill=white!20, text width=5em,align=center, rounded corners},
 block1/.style ={rectangle, draw=blue, thick, fill=blue!20, text width=5em,align=center, rounded corners, minimum height=2em},
 line/.style ={draw, thick, -latex',shorten >=2pt},
 cloud/.style ={draw=red, thick, ellipse,fill=red!20,
 minimum height=1em}]

\node [block,node distance=2.2cm,text width=2.7cm] (FFG) {\phase{1} \footnotesize Fitness-function\\Generator};
\node [output, left of=FFG,node distance=2.5cm] (RT) {};
\node [block,right of=FFG,node distance=3.5cm,text width=2.0cm] (SE) {\phase{2} \footnotesize Search\\ Engine};
\node [output, above of=SE,node distance=1cm] (PITB) {};
\node [output, right of=SE,node distance=2.5cm] (OUT) {};

\draw[-stealth] (RT.east) -- (FFG.west)    node[pos=0.5, above]{RT};
\draw[-stealth] (FFG.east) -- (SE.west)    node[pos=0.5, above]{FF};
\draw[-stealth] (PITB.south) -- (SE.north)    node[pos=0.5, right]{PI,TB};
\draw[-stealth] (SE.east) -- (OUT.west)    node[pos=0.5, above]{TC/NFF};
\end{tikzpicture}
\caption{Overview of our SBST framework.}
\label{fig:contribution}
\end{figure}

\Cref{fig:contribution} presents a high-level view of our SBST framework.
The inputs to the framework are: a parameterized input (PI), a Requirements Table (RT), and a time budget (TB). 
The output is either a failure-revealing test case (\failingtestsequence) or a value (\nff) indicating that the SBST framework could not detect a failure-revealing test case within the time budget.
Our SBST framework comprises the \emph{Fitness-Function Generator} (\phase{1}) and the  \emph{Search Engine} (\phase{2}) phases.

The \emph{Fitness-Function Generator} phase (\phase{1}) compiles the 
Requirements Table to a fitness function (FF) that guides the search-based exploration.
For example, \Cref{fig:FitCalculator} shows the FF for the SC running example. 
The fitness function generated from the RT should satisfy the following well-formedness properties: 
\begin{enumerate}
    \item a negative fitness value indicates that the RT is violated; 
    \item a positive fitness value indicates that the RT is satisfied;
    \item the higher the positive fitness value is, the farther the RT is from violation; and 
    \item the lower the negative value is, the farther the RT from satisfaction.
\end{enumerate}
These properties can be interpreted differently.
We report two interpretations for Requirements~1, 2, and 3 of the SC running example (many other interpretations exist).
Consider two scenarios.
In the first scenario, Requirements~1 and 2 are far from being violated (e.g., $\mathit{P\_s}=87.25$), but Requirement 3  is very close to being violated (e.g., $\mathit{T\_s}=79.300001$).
In the second scenario, all the Requirements~1, 2, and 3 are close to the violation (e.g., $\mathit{P\_s}=87.45$ and $\mathit{T\_s}=79.35$).
One interpretation may consider the first scenario as the one in which the system is closer to violating its requirements since $79.300001$ is just about reaching the boundary value $79.3$: None of the requirements for the second scenario is as close as Requirement~$3$ from violating its requirement. 
Another interpretation may consider the second scenario closer to violation since the values $\mathit{P\_s}=87.45$ and $\mathit{T\_s}=79.35$ are closer to their thresholds $\mathit{P\_s}=87.5$ and $\mathit{T\_s}=79.3$ on average.
Both these interpretations are valid (as are many others).
In this work, we select the first interpretation. 
\begin{definition}
\label[definition]{def:multiple}
Let \requirementTable be an RT made of $n$ different requirements  $\requirement_1, \requirement_2, \ldots, \requirement_n$ and $\interpretation{\requirement_1},$ $\interpretation{\requirement_2}, \ldots,$ $\interpretation{\requirement_n}$ be their satisfaction degrees. Then, the satisfaction degree associated with the RT \requirementTable is the minimum across the satisfaction degrees of its requirements, i.e., $\interpretation{\requirementTable}=min(\interpretation{\requirement_1},$ $\interpretation{\requirement_2}, \ldots,$ $\interpretation{\requirement_n})$.
\end{definition}
According to this definition, we can reformulate the properties of a fitness function as follows:
\begin{definition}
\label[definition]{def:ftprop}
A fitness function of an RT is a function from a set of input signals to a fitness value that satisfies the following \emph{well-formedness} properties:    
\begin{enumerate}
    \item a negative fitness value indicates that at least one requirement from the RT is violated; 
    \item a positive fitness value indicates all the requirements of the RT are satisfied;
    \item the higher the positive fitness value is, the farther the system is from violating the requirement that is closer to violation; and 
    \item the lower the negative value is, the farther the system is from satisfying the requirement that is farther from satisfaction.
\end{enumerate}
\end{definition}
The goal of the \emph{Fitness-Function Generator} phase (\phase{1}) is to compile the Requirements Table into a fitness function (FF) that provides these properties
\rep{. To define this component, we first define a quantitative semantics for RT (Section \ref{sec:Semantics}) which is then as a metric to guide the search process, and therefore, considered for developing the translation from RT to FF (Section \ref{sec:transl}).}
{and is described in Section \ref{sec:transl}.}
{introQuantSemantics}
{small_contribution_1}

The \emph{Search Engine} (\phase{2}) phase iteratively generates a test case by assigning values to the parameters of the parameterized inputs (PI).
Then, it executes the model for this test input and computes the fitness value using the fitness function.
If the test case leads to a negative fitness value, it is a failure-revealing test case and it is returned as output.
Otherwise, a new test case is generated.
The test cases previously generated and their corresponding fitness values computed by the fitness function drive the generation of the values assigned to the parameters of the PI.
If no test case with a negative fitness value is found within the considered time budget (TB), the NFF value is returned, indicating that the search engine could not find test cases that violate the requirements within the considered budget. 
\changeRead{The search engine is not a contribution provided by this work. 
The academic and industrial communities developed many search engines.
We reuse off-the-shelf engines to implement our framework (see \Cref{sec:eval}): Reusing existing components with proven efficiency is a recommended software engineering practice.}{search_methodAnswer}{search_method}

We considered Test Sequence blocks~\cite{TestSequenceBasics} for the specification of test inputs, and their parameterized version (Parameterized Test Sequences~\cite{Hecate2024}) for the specification of the parameterized input.
Therefore, we developed our solution as a plugin for HECATE~\cite{Hecate2024}, a testing approach supporting the Test Sequence and Test Assessment blocks from \SLTest.
This decision enables us to reuse the \emph{Search Engine} component from HECATE, which in turn relies on S-Taliro~\cite{STaliro}, a widely known testing tool for Simulink models.
We integrated the \emph{Fitness-Function Generator} component into HECATE as described in \Cref{sec:transl}. 

 \section{Quantitative Semantics of Requirements Tables}
\label{sec:Semantics}
\change{
First, we define traces (\Cref{sec:traces}). 
Traces are used to define the quantitative (robustness) semantics of Requirements Tables (\Cref{sec:semantics}) that drive the definition of our fitness metric.}{introSemantics}{algorithm_completeness}

\subsection{Traces}
\label{sec:traces}
\change{
A trace $\pi$ associates a value for every input $\inputVariable \in \inputs$ or output $\outputVariable \in \outputs$ variable and position $\indexvariable \in \nonnegativenatural$.
For example, \Cref{fig:traceExample} presents two examples of (fragments of) traces for the instance from \Cref{fig:signals}.}{traceSectionStart}{algorithm_completeness}

\begin{figure*}[t]
\centering
 \begin{subfigure}[b]{\columnwidth}
    \scalebox{0.9}{
       \begin{tikzpicture}
	\pgfmathsetmacro{\ilocationangularrate}{0.25}
	\pgfmathsetmacro{\ilocationmode}{-0.25}
	\pgfmathsetmacro{\ilocationtimestamp}{-0.75}
	\pgfmathsetmacro{\ilocationindex}{-1.2}
	\draw[dashed] (-2,0.5) -- (7.2,0.5);
		\draw node at (-1,\ilocationangularrate) {\footnotesize $F\_s~[kg/s]$};
	\draw node at (0.5,\ilocationangularrate) {\small $4.007$};
	\draw node at (2,\ilocationangularrate) {\small $4.005$};
	\draw node at (3.5,\ilocationangularrate) {\small $4.003$};
	\draw node at (5,\ilocationangularrate) {\small $4.001$};
	\draw node at (6.5,\ilocationangularrate) {\small $3.999$};
	\draw[dashed] (-2,-0) -- (7.2,-0);
		\draw node at (-1,\ilocationmode) {\footnotesize $P\_s~[bar]$};
	\draw node at (0.5,\ilocationmode) {\small $87.321$};
	\draw node at (2,\ilocationmode) {\small $87.321$};
	\draw node at (3.5,\ilocationmode) {\small $87.321$};
	\draw node at (5,\ilocationmode) {\small $87.320$};
	\draw node at (6.5,\ilocationmode) {\small $87.317$};
		\draw[dashed] (-2,-0.5) -- (7.2,-0.5);
		\draw node at (-1,\ilocationtimestamp) {\small $\timevariableEncoding~[s]$};
	\draw node at (0.5,\ilocationtimestamp) {\small $22.0$};
	\draw node at (2,\ilocationtimestamp) {\small $22.1$};
	\draw node at (3.5,\ilocationtimestamp) {\small $22.2$};
	\draw node at (5,\ilocationtimestamp) {\small $22.3$};
	\draw node at (6.5,\ilocationtimestamp) {\small $22.4$};
	\draw[dashed] (-2,-1) -- (7.2,-1);
	\draw node at (-1,\ilocationindex) 
	{\small \indexvariable};
	\draw node at (0.5,\ilocationindex) {\small $221$};
	\draw node at (2,\ilocationindex) {\small $222$};
	\draw node at (3.5,\ilocationindex) {\small $223$};
	\draw node at (5,\ilocationindex) {\small $224$};
	\draw node at (6.5,\ilocationindex) {\small $225$};
    \draw[dashed] (-2,-1.5) -- (7.2,-1.5);
	\pgfmathsetmacro{\llocation}{-1.5}
	\pgfmathsetmacro{\dlocation}{-2}
	\pgfmathsetmacro{\xlocation}{-2.5}
	\pgfmathsetmacro{\edgelocation}{-3.5} ;
 \draw[dashed] (-2,-1.5) -- (6.8,-1.5);
	\pgfmathsetmacro{\llocation}{-1.5}
	\pgfmathsetmacro{\dlocation}{-2}
	\pgfmathsetmacro{\xlocation}{-2.5}
	\pgfmathsetmacro{\edgelocation}{-3.5} ;
	\draw (2.7,0.6) -- (4.3,0.6) -- (4.3,-1.6) -- (2.7,-1.6) -- (2.7,0.6);
		\draw node at (3.9,-1.8) {\small Record $r_3$};
	\end{tikzpicture} }
    \caption{Fixed Step Trace.}
    \label{fig:fixedTrace}
    \end{subfigure}
    \begin{subfigure}[b]{\columnwidth}
    \scalebox{0.9}{
      \begin{tikzpicture}
	\pgfmathsetmacro{\ilocationangularrate}{0.25}
	\pgfmathsetmacro{\ilocationmode}{-0.25}
	\pgfmathsetmacro{\ilocationtimestamp}{-0.75}
	\pgfmathsetmacro{\ilocationindex}{-1.2}
	\draw[dashed] (-2,0.5) -- (7.2,0.5);
		\draw node at (-1,\ilocationangularrate) {\footnotesize $F\_s~[kg/s]$};
	\draw node at (0.5,\ilocationangularrate) {\small $4.006$};
	\draw node at (2,\ilocationangularrate) {\small $4.001$};
	\draw node at (3.5,\ilocationangularrate) {\small $3.999$};
	\draw node at (5,\ilocationangularrate) {\small $3.994$};
	\draw node at (6.5,\ilocationangularrate) {\small $3.989$};
	\draw[dashed] (-2,-0) -- (7.2,-0);
		\draw node at (-1,\ilocationmode) {\footnotesize $P\_s~[bar]$};
        \draw node at (0.5,\ilocationmode) {\small $87.321$};
	\draw node at (2,\ilocationmode) {\small $87.320$};
	\draw node at (3.5,\ilocationmode) {\small $87.319$};
	\draw node at (5,\ilocationmode) {\small $87.307$};
	\draw node at (6.5,\ilocationmode) {\small $87.285$};
		\draw[dashed] (-2,-0.5) -- (7.2,-0.5);
		\draw node at (-1,\ilocationtimestamp) {\small $\timevariableEncoding~[s]$};
	\draw node at (0.5,\ilocationtimestamp) {\small $22.05$};
	\draw node at (2,\ilocationtimestamp) {\small $22.32$};
	\draw node at (3.5,\ilocationtimestamp) {\small $22.40$};
	\draw node at (5,\ilocationtimestamp) {\small $22.67$};
	\draw node at (6.5,\ilocationtimestamp) {\small $22.94$};

	\draw[dashed] (-2,-1) -- (7.2,-1);
	\draw node at (-1,\ilocationindex) 
	{\small \indexvariable};
	\draw node at (0.5,\ilocationindex) {\small $103$};
	\draw node at (2,\ilocationindex) {\small $104$};
	\draw node at (3.5,\ilocationindex) {\small $105$};
	\draw node at (5,\ilocationindex) {\small $106$};
	\draw node at (6.5,\ilocationindex) {\small $107$};
    \draw[dashed] (-2,-1.5) -- (7.2,-1.5);
	\pgfmathsetmacro{\llocation}{-1.5}
	\pgfmathsetmacro{\dlocation}{-2}
	\pgfmathsetmacro{\xlocation}{-2.5}
	\pgfmathsetmacro{\edgelocation}{-3.5} ;
 \draw[dashed] (-2,-1.5) -- (7.2,-1.5);
	\pgfmathsetmacro{\llocation}{-1.5}
	\pgfmathsetmacro{\dlocation}{-2}
	\pgfmathsetmacro{\xlocation}{-2.5}
	\pgfmathsetmacro{\edgelocation}{-3.5} ;
	\draw (2.7,0.6) -- (4.3,0.6) -- (4.3,-1.6) -- (2.7,-1.6) -- (2.7,0.6);
		\draw node at (3.9,-1.8) {\small Record $r_3$};
	\end{tikzpicture} }
    \caption{Variable Step Trace.}   
    \label{fig:variableTrace}
 \end{subfigure}
\caption{\textcolor{blue}{Two examples of a trace for the example from \Cref{fig:signals}.}}
\label{fig:traceExample}
\end{figure*}

\begin{definition}[Trace]
{\rm
\textcolor{blue}{
A \textit{trace} \tracesymbol  is a tuple \trace. 
The input  $\inputInterpretation: \nonnegativenatural  \times \inputs \rightarrow   \domain $ and
 output $\outputInterpretation: \nonnegativenatural  \times \outputs \rightarrow   \domain $ interpretations
 associate each input $\inputVariable \in \inputs$ and output $\outputVariable \in \outputs$ variable and position $\indexvariable \in \nonnegativenatural$  with a value $\inputInterpretation(\indexvariable,\inputVariable)$ and $\outputInterpretation(\indexvariable,\outputVariable)$ from their domain $\type(\inputVariable)$ and $\type(\outputVariable)$.
  The timestamp interpretation $\interpretationtime: \nonnegativenatural \rightarrow  \real $ associates each position \indexvariable to a timestamp value.}}
$\hfill$ $\Box$
\end{definition}

\textcolor{blue}{
\Cref{fig:traceExample} presents two examples of (fragments of) traces for the example from \Cref{fig:signals}.
The traces show the values of the input ``Steam flow rate'' ($F\_s$) and the output ``Steam Pressure'' ($P\_s$) variables change over time.
The function $\interpretationtime$ associates each position with a timestamp. 
For example, for the trace in \Cref{fig:fixedTrace},  $\interpretationtime(223)=22.2$s and $\interpretationtime(225)=22.4$s.
The functions \inputInterpretation and \outputInterpretation associate a value for each timestamp position and variable. 
For example, for the trace in 
 \Cref{fig:fixedTrace}, $\inputInterpretation(223,$F\_s$)=4.003~\frac{kg}{s}$ and 
$\outputInterpretation(223,$P\_s$)=87.321~\mathit{bar}$.
The symbol \traces denotes the universe of all the traces.
A trace record refers to the values assumed by the timestamp and the input and output variables at a specific position of the trace.
 For example, the trace records $r_3$ of the two traces in \Cref{fig:traceExample} are delimited by a square frame.
When we refer to a generic variable \variable, that can refer both to an input or an output variable, we use $\interpretation{\variable}_\indexvariable$ to indicate its interpretation in position \indexvariable, that is $\interpretation{\variable}_\indexvariable=\inputInterpretation(\indexvariable,\variable)$ when \variable is an input variable, and 
$\interpretation{\variable}_\indexvariable=\outputInterpretation(\indexvariable,\variable)$ when \variable is an output variable.
For example,  for the trace in \Cref{fig:fixedTrace},  the interpretation of the variable $F\_s$ at position $223$ is  $\interpretation{$F\_s$}_{223}=\inputInterpretation(223,$F\_s$)=4.003~\frac{kg}{s}$.
}

\textcolor{blue}{
We consider two types of trace:  fixed and variable sample step traces. 
A fixed sample step trace records the signal values regularly after a specific sample step $T_s$.
For example, \Cref{fig:fixedTrace} has a fixed sample step: The trace records are recorded every $0.1$ time instants, as exemplified by the values assumed by the variable $\tau$ that collect the timestamp at which the signal values are recorded.
Unlike fixed sample step traces, the sample step of variable sample step traces changes.
For example, \Cref{fig:variableTrace} has a variable sample time as exemplified by the values assumed by the variable $\tau$.
}

\begin{definition}[Fixed and Variable Step Traces]
{\rm \textcolor{blue}{A trace \tracesymbol has a \emph{fixed sampling time} \sampleTime if for every 
index $\indexvariable\in \nonnegativenatural, 
 \interpretationtime(\indexvariable+1)-\interpretationtime(\indexvariable)=\sampleTime$. 
 A trace with a fixed sampling time is a \emph{fixed step trace}.
 Otherwise, the trace is a \emph{variable step trace} since it has a variable sampling time.
 }}
 $\hfill$ $\Box$
\end{definition}
\textcolor{blue}{
The trace from \Cref{fig:fixedTrace} has a fixed sampling time since for all index $\indexvariable\in \nonnegativenatural, \interpretationtime(\indexvariable+1)-\interpretationtime(\indexvariable)=0.1$.
The trace from \Cref{fig:variableTrace} has a variable sampling time since the difference ($0.08s$) between the timestamp ($22.32$) in position $104$ and the timestamp ($22.40$) in position $105$ differs from the difference ($0.27s$) between the timestamp ($22.40$) in position $105$ and the timestamp ($22.67$) in position $106$.
}

\change{
The semantics of Requirements Tables can be defined by considering fixed and variable step traces as detailed in the following section.
}{traceSectionEnd}{algorithm_completeness}

\begin{figure*}

\footnotesize
\begin{subfigure}[b]{\textwidth}
\begin{tabular}{p{0.95\textwidth}} 
\toprule
$\begin{aligned}
& \interpretation{\constant}_{\indexvariable ,\tracesymbol} & \coloneq & \constant & \nonumber\\ 
& \interpretation{\variable}_{\indexvariable ,\tracesymbol} & \coloneq & \interpretation{\variable}_{\indexvariable, \tracesymbol} & \nonumber\\ 
& \interpretation{\prev(\variable)}_{\indexvariable ,\tracesymbol} & \coloneq &  
\begin{cases}
   \interpretation{\variable}_{\indexvariable-1, \tracesymbol} & \text{if }\indexvariable>0   \\
   \interpretation{\variable}_{\indexvariable, \tracesymbol} & \text{if } \indexvariable=0   
\end{cases}
\nonumber\\ 
&  \interpretation{\term_1   \odot \term_2}_{\indexvariable ,\tracesymbol} & \coloneq & \interpretation{\interpretation{\term_1}_{\indexvariable ,\tracesymbol}\odot \interpretation{\term_2}_{\indexvariable ,\tracesymbol}}_{\indexvariable ,\tracesymbol} & \nonumber\\
& \interpretation{\term_1\oplus \term_2}_{\indexvariable, \tracesymbol} & \coloneq & 
\begin{cases} 
\interpretation{\term_2}_{\indexvariable ,\tracesymbol}-\interpretation{\term_1}_{\indexvariable ,\tracesymbol} &\text{if } \odot \in \{<, \leq \}  \\
\interpretation{\term_1}_{\indexvariable ,\tracesymbol}-\interpretation{\term_2}_{\indexvariable ,\tracesymbol} & \text{if } \odot \in \{>, \geq \} \\
abs(\interpretation{\term_1}_{\indexvariable ,\tracesymbol}-\interpretation{\term_2}_{\indexvariable ,\tracesymbol}) & \text{if } \odot \in \{=\} \\
\end{cases} \nonumber \\
&  \interpretation{\neg \logicalexpression}_{\indexvariable, \tracesymbol} & \coloneq & -\interpretation{\logicalexpression}_{\indexvariable, \tracesymbol} & \nonumber\\
&  \interpretation{\logicalexpression_1 \oslash \logicalexpression_2}_{\indexvariable, \tracesymbol}  & \coloneq & 
\begin{cases}
\text{if } \oslash=``\&'' & \text{then } min(\interpretation{\logicalexpression_1}_{\indexvariable, \tracesymbol}, \interpretation{\logicalexpression_2}_{\indexvariable, \tracesymbol})\\
\text{if } \oslash=``|'' & \text{then } max(\interpretation{\logicalexpression_1}_{\indexvariable, \tracesymbol}, \interpretation{\logicalexpression_2}_{\indexvariable, \tracesymbol})\\ 
\text{if } \oslash=``\Rightarrow''  & \text{then } \indexvariable, \tracesymbol \models (\neg \logicalexpression_1 \oslash \logicalexpression_2)\\
\end{cases}
 & \nonumber\\
& \interpretation{\preconditions  \Rightarrow \postconditions}_{\indexvariable, \tracesymbol}  & \coloneq & max(-\interpretation{\preconditions}_{\indexvariable, \tracesymbol}, \interpretation{\postconditions}_{\indexvariable, \tracesymbol}) & \nonumber\\  
& \interpretation{\requirement, \requirementTable}_{\indexvariable, \tracesymbol} & \coloneq & min(\interpretation{\requirement}_{\indexvariable, \tracesymbol},  \interpretation{\requirementTable}_{\indexvariable, \tracesymbol} )  & \nonumber 
\end{aligned}$\\
\bottomrule
abs: absolute value
\end{tabular}
\caption{Semantics for the operators not based on timestamp values.}
\label{fig:commonOperators}
\end{subfigure}
\footnotesize
\begin{subfigure}[b]{\textwidth}
\begin{tabular}{p{0.95\textwidth}} 
\toprule
$\begin{aligned}
& \interpretation{\duration(\logicalexpression) \geq \constant}_{\indexvariable, \tracesymbol} & \coloneq & min_{\interpretationtime(\indexvariable)\geq \constant \wedge (k\in \positivenatural.(\interpretationtime(\indexvariable)-\interpretationtime(\indexvariable-k) \leq c_r))} \interpretation{\logicalexpression}_{k, \tracesymbol} \nonumber\\
& \interpretation{\preconditions[\durationcolumn]  \Rightarrow \postconditions}_{\indexvariable, \tracesymbol} & \coloneq & max((-min_{\interpretationtime(\indexvariable)\geq \constant, k\in \positivenatural.(\interpretationtime(\indexvariable)-\interpretationtime(\indexvariable-k) \leq d_r)} \interpretation{\preconditions}_{k, \tracesymbol}), 
\interpretation{\postconditions}_{\indexvariable, \tracesymbol}) \nonumber\\
\end{aligned}$\\
\bottomrule
$^\ast$ $c_r=\lceil \frac{\constant}{\sampleTime} \rceil \cdot \sampleTime$ and $d_r=\lceil \frac{\durationcolumn} {\sampleTime} \rceil \cdot \durationcolumn$. The operators are defined for $\constant \geq \sampleTime$ and $\durationcolumn \geq \sampleTime$. 
\end{tabular}
\caption{Fixed Step Semantics.}
\label{fig:fixedStepSemantics}
\end{subfigure}
\begin{subfigure}[b]{\textwidth}
\begin{tabular}{p{0.95\textwidth}} 
\toprule
$\begin{aligned}
&  \interpretation{\duration(\logicalexpression) \geq c}_{\indexvariable, \tracesymbol}
& \coloneq & max_{k \in \positivenatural.( k\leq\indexvariable \wedge \interpretationtime(\indexvariable)-\interpretationtime(k)\geq c)} min_{j \in \positivenatural.(k \leq j \leq \indexvariable)} \interpretation{\logicalexpression}_{j, \tracesymbol}  & \nonumber\\
& \interpretation{\preconditions[\durationcolumn]  \Rightarrow \postconditions}_{\indexvariable, \tracesymbol}  & \coloneq & max(-(max_{k \in \positivenatural.( k\leq\indexvariable \wedge \interpretationtime(\indexvariable)-\interpretationtime(k) \geq \durationcolumn)}  min_{j \in \positivenatural.(k \leq j \leq \indexvariable)} \interpretation{\preconditions}_{j, \tracesymbol} ), \interpretation{\postconditions}_{\indexvariable, \tracesymbol}) \nonumber\\
\end{aligned}$\\
\bottomrule
\end{tabular}
\caption{Variable Step Semantics.}
\label{fig:variableStepSemantics}
\end{subfigure}
     \caption{Requirements Tables: Quantitative Semantics.}
    \label{fig:semantics}
\end{figure*}

\subsection{Semantics}
\label{sec:semantics}
\change{
We define a quantitative trace-based semantics of Requirements Tables by first defining (Definition~\ref{def:position}) a metric for the satisfaction of (the requirements of) a Requirements Table (\requirementTable) in a position (\indexvariable) of a trace (\tracesymbol), and then by defining (Definition~\ref{def:traceSatisfaction}) a metric for the satisfaction of a Requirements Table (\requirementTable) in that trace (\tracesymbol).}{semanticsSectionStart}{algorithm_completeness}

\begin{definition}[Semantics]
{\rm
\textcolor{blue}{
Let \requirementTable,  $\tracesymbol=\trace$, and \indexvariable$ \geq 0$ respectively be a Requirements Table, a trace, and a position.
The satisfaction $\interpretation{\requirementTable}_{\indexvariable, \tracesymbol}$ of the Requirements Table \requirementTable
in position \indexvariable of the trace $\tracesymbol$ is recursively defined in \Cref{fig:semantics}.}}
\label{def:position}
\end{definition}

\textcolor{blue}{
The semantics (a.k.a. interpretation) from \Cref{fig:semantics} includes all the different operators of the Requirements Tables.
For the operator $\duration(\logicalexpression)$ and the requirement expressed using the duration column (\durationcolumn), which requires considering the values assumed by the timestamp values, we defined two semantics: the fixed (\Cref{fig:fixedStepSemantics}) and the variable  (\Cref{fig:variableStepSemantics}) step semantics that respectively refer to fixed and variable step traces.
In the following, we describe the semantics of each operator.}

\textcolor{blue}{
For a position \indexvariable, the value $\interpretation{\constant}_{\indexvariable, \tracesymbol}$ of the constant \constant corresponds to its value.}

\textcolor{blue}{
The interpretation $\interpretation{\variable}_{\indexvariable, \tracesymbol}$ of a variable \variable at position \indexvariable is value $\interpretation{\variable}_\indexvariable$ (see \Cref{sec:traces}). 
For example, for \Cref{fig:fixedTrace}, the interpretation of the variable $F\_s$ at position $223$ is  $\interpretation{F\_s}_{223}=4.003~\frac{kg}{s}$.}

\textcolor{blue}{
The interpretation of the term $\prev(\variable)$ at position \indexvariable is the value 
$\interpretation{\variable}_{\indexvariable-1}$ assumed by the variable \variable in the previous position $\indexvariable-1$ of the trace, or the initial value ($\interpretation{\variable}_0$) assigned to the variable (\variable) if the position is $0$. 
For simplicity, in this work, we assume that every variable is assigned to an initial value. 
For example, for \Cref{fig:fixedTrace}, the interpretation $\prev(F\_s)$ of the variable $F\_s$ at position $223$ is $\interpretation{F\_s}_{222}=4.005~\frac{kg}{s}$. }

\textcolor{blue}{
The interpretation of $\term_1  \odot \term_2$ is obtained by applying the arithmetic operator $\odot$ to the interpretations of $\term_1$ and $\term_2$.
For example, for \Cref{fig:fixedTrace}, the interpretation of $F\_s$+$5$ at position $223$ is $9.003$, that is, the sum between $5$ and the interpretation of the input variable $F\_s$ (i.e., $4.003$) at position~$223$.}

\textcolor{blue}{
The interpretation of an expression of the form 
 $\term_1\oplus \term_2$ quantifies its satisfaction degree at position \indexvariable, i.e., it quantifies how $\term_1$ and $\term_2$ satisfy the relation specified by the relational operator $\oplus$.
 For example, for \Cref{fig:fixedTrace}, the interpretation of $F\_s \geq 4$ at position $223$ is $0.003$, that is, the difference between the interpretation of the input variable $F\_s$ (i.e., $4.003$) at position~$223$ and the value $4$.}

\textcolor{blue}{
The semantics of an expression of the form $\neg \logicalexpression$ is the opposite of the satisfaction of the expression $ \logicalexpression$ at position \indexvariable.
 For example, for \Cref{fig:fixedTrace}, the quantitative semantics of the expression $\neg(F\_s>4)$ is $-0.003$.}

\textcolor{blue}{
The semantics of the logical expression $\logicalexpression_1 \oslash \logicalexpression_2$ 
quantifies its satisfaction degree at position \indexvariable, i.e., it quantifies how $\term_1$ and $\term_2$ satisfy the relation specified by the logical operator $\oslash$.
 For example, for \Cref{fig:fixedTrace}, for the logical expression $(P\_s>87) \& (P\_s<87.5)$ at position~$223$, the quantitative semantics is obtained by computing the minimum ($0.179$) between the quantitative value ($0.321$) of the expression $P\_s>87$ and the quantitative value ($0.179$) for the expression $P\_s<87.5$.}

\textcolor{blue}{
The semantics of $\preconditions  \Rightarrow \postconditions$ quantifies its satisfaction degree at position \indexvariable, i.e., it computes the maximum between the opposite of the satisfaction value of the precondition (\preconditions) and the satisfaction value of the postcondition (\postconditions).}

\textcolor{blue}{
The semantics of $\requirement, \requirementTable$ specifies that the satisfaction value is the minimum between the satisfaction value for the requirement $\requirement$ and the other requirements of the Requirements Table $\requirementTable$. }

\textcolor{blue}{
Two semantics for the operator $\duration(\logicalexpression) \geq \constant$ and requirements expressed using a duration column are defined: Fixed step and variable step semantics. These semantics respectively support fixed and variable step traces.}

\begin{itemize}[leftmargin=0.07in]
    \item \emph{Fixed step semantics}. \textcolor{blue}{For the operator $\duration(\logicalexpression) \geq \constant$, we compute the minimum across the values assumed by the logical expression $\logicalexpression$ for a time window with duration $c$.
The value $c_r=\lceil \frac{\constant} {\sampleTime} \rceil \cdot \sampleTime$ is used to define the operator's semantics, where $\lceil \cdot \rceil$ is the ceil function.
This value accounts for the value $\constant$ not being a multiplier of the fixed sampling time $\sampleTime$. The interested reader can consult~\cite{menghi2024completeness} for additional details.\\
The fixed step semantics of a requirement expressed using a duration column is the maximum among the opposite of the satisfaction value computed for the precondition (by considering the duration $d$), and its postcondition. }
\item \emph{Variable step semantics}.  \textcolor{blue}{
The variable step semantics for the operator $\duration(\logicalexpression) \geq \constant$ and requirements expressed using a duration column correspond to the one proposed for the fixed step semantics. However, being the trace based on a variable step, the \emph{max} operators are used to identify the intervals associated with the expression $\duration(\logicalexpression) \geq \constant$ and the duration column to be considered for the computation of the quantitative measures. }
\end{itemize}

\begin{definition}[Satisfiability]
\label{def:traceSatisfaction}
\textcolor{blue}{
A trace \tracesymbol satisfies a Requirements Table \requirementTable,  if  $\interpretation{\requirementTable}_{\indexvariable, \tracesymbol}>0$  for every position \indexvariable. We use the notation $ \interpretation{\requirementTable}_\tracesymbol$ to indicate the minimum value of $\interpretation{\requirementTable}_{\indexvariable, \tracesymbol}$ for every position $\indexvariable$ of the trace.}
 \end{definition}

\textcolor{blue}{
A trace (\tracesymbol) satisfies (the requirements of) a Requirements Table (\requirementTable) if the value $\interpretation{\requirementTable}_\tracesymbol$ is greater than $0$.}

\begin{theorem}
\textcolor{blue}{
Our semantics satisfies the \emph{well-formedness} properties from Definition~\ref{def:ftprop}}
\end{theorem}
\begin{proof}[Proof Sketch] 
\textcolor{blue}{
The definition of the quantitative semantics for the relational operators (\Cref{fig:semantics}) ensures that (a)~negative and positive values indicate that the relation is violated and satisfied,  (b)~the higher the positive value, the more it is satisfied, and (c)~the lower the positive value, the more it is violated.}

\textcolor{blue}{
The definition of the quantitative semantics for the Boolean operators, requirements, and RT (\Cref{fig:semantics}) preserves this property.}
\end{proof}

\change{
Our fitness metric (and translation from RT to Stateflow) is defined by considering the quantitative semantics of Requirements Tables from Definition~\ref{def:traceSatisfaction}.}{semanticsSectionEnd}{algorithm_completeness}

 \section{From Requirements Table to Stateflow}
\label{sec:transl}

This section describes the behavior of our \emph{Fitness-Function Generator} (\phase{1})
translating RT into Stateflow models that compute a fitness function that drives our SBST procedure.
First, we describe the behavior of our procedure for RTs that \emph{do not} use the duration and the previous operators (\Cref{sec:algorithm}).
Then, in \Cref{sec:dur} and \Cref{sec:prev}, we describe how our procedure supports these operators.
Finally, \Cref{sec:soundness} proves that our fitness function satisfies the properties from \Cref{def:ftprop}.

To illustrate our procedure, we use the RT from \Cref{fig:exampleRT} and the corresponding fitness function from \Cref{fig:FitCalculator} consisting of the Stateflow model from \Cref{fig:exampleSF} and the aggregator from \Cref{fig:FitConverter}.

\subsection{The \RTtoStateflow Algorithm}
\label{sec:algorithm}

\Cref{alg:RT2Stateflow} contains the pseudocode of our Fitness-Function Generator (\RTtoStateflow).
The input of our algorithm is an RT; the output is a fitness function (FF).

\begin{itemize}
\item Line~\ref{alg:ffCreation} creates the main structure of the fitness function, as detailed in \Cref{fig:FitCalculator}.
The fitness function consists of two parts: The Stateflow model and the aggregator.
Then, the algorithm starts constructing the Stateflow subcomponent (\Cref{fig:exampleSF}).
    \item Line~\ref{alg:initialization} creates an empty Stateflow.
\item Line~\ref{alg:initializeRT} creates a set of parallel states, one for each requirement of the RT.
For the running example, our procedure generates the parallel states R1, R2, and R3 from \Cref{fig:exampleSF} associated with the Requirements 1, 2, and 3 of the RT from \Cref{fig:exampleRT}.
\item Line~\ref{alg:prepost} creates the states PRC and POA in each of the parallel states.
The WT state will be described in \Cref{sec:dur}.
The Stateflow will be in the state PRC (precondition checking) when it is checking for the satisfaction of the precondition, it will move to the state POA (postcondition active) when the precondition is satisfied.
\item Line~\ref{alg:labelTransitions} adds the transitions from state PRC to POA and from POA to PRC. 
The algorithm labels the transition from PRC to POA of the Stateflow with the Boolean expression associated with the Precondition of the RT.
For example, the transition from PRC to POA of state R2 of the Stateflow model from \Cref{fig:exampleSF} is labeled with the Boolean expression ``$t>=30~\&~t<=35$'', i.e., the precondition of the Requirement~2 of the RT from \Cref{fig:exampleRT}.
The algorithm labels the transition from POA to PRC with the negation of the precondition, i.e., when the precondition is violated the Stateflow exits the state POA and returns to the state PRC.
\item Line~\ref{alg:addInit} adds the entry point of the state machine.
\item Line~\ref{alg:addJunction} adds the connective junction node.
The connective junction node is used to move to the POA state (that assesses the postcondition at the initial time instant) if the precondition is initially satisfied, to avoid entering the PRC state: An RT can perform one state-to-state transition at each timestamp.
\item Line~\ref{alg:addFitness} adds a fitness variable $\mathit{ff}1$, $\mathit{ff}2$, $\ldots$, $\mathit{ff}n$ associated with each requirement.
For our running example, it adds the fitness variables  $\mathit{ff}1$, $\mathit{ff}2$, and $\mathit{ff}3$ associated with the Requirements~1, 2, and 3.\\ 
It also adds an action that sets the value of the fitness variable for each state.
When the state machine is in the PRC, the value is set to infinite (Inf), e.g., see the action of the PRC state of the state machine R2.
When the state machine is in the POA, the fitness value is computed by considering the postcondition of the requirement associated with the state machine. Relational expressions in the form $\term_1\oplus \term_2$ with $\oplus \in \{>,\geq\}$ is converted into the arithmetic expression ~$\term_1- \term_2$. 
\rep{Relational expressions in the form $\term_1\oplus \term_2$ with $\oplus \in \{<,\leq\}$ are first converted into equivalent expressions in the form $\term_1^\prime\oplus \term_2^\prime$ with $\oplus \in \{>,\geq\}$ using standard algebraic operations}{We omit the translation for the other relational operators since they can be derived from these relational operators by using the Boolean operators.}{RelOpConversion}{small_contribution_1}
Boolean expressions in the form 
$\logicalexpression_1 \oslash \logicalexpression_2$ 
are converted into arithmetic expressions by converting the logical expressions $\logicalexpression_1$ and $\logicalexpression_2$ into arithmetic expressions and by replacing the logical operators ``$\&$'' and ``$|$'' with the operators \emph{min} and \emph{max}.
Boolean expressions in the form $\neg \logicalexpression$ are converted into the opposite of the arithmetic expression (the arithmetic expression preceded by a minus sign) computed from the logical expression $\logicalexpression$.
For example, the postcondition $\mathit{P\_s}>87~\&~\mathit{P\_s}<87.5$ of the requirement~2 is converted into the arithmetic expression $min(\mathit{P\_s}-87,-(\mathit{P\_s}-87.5))$, where $\mathit{P\_s}-87$ is obtained from $\mathit{P\_s}>87$,  $-(\mathit{P\_s}-87.5)$ is obtained from $\mathit{P\_s} \leq 87.5$.
This instruction terminates the generation of the Stateflow model from \Cref{fig:FitCalculator}.
\item Line~\ref{alg:addActions} adds the content of the Action columns to the POA state of the corresponding requirement.
\item Line~\ref{alg:min} generates the aggregator component detailed in \Cref{fig:FitConverter}.
The aggregator component computes the fitness measure from the fitness measures generated by each state machine.
The \texttt{min} Simulink block computes the minimum value assumed by the signals of its input ports.
For example, 
the  \texttt{min} block from \Cref{fig:FitConverter} is connected to the input signals $\mathit{ff}1$, $\mathit{ff}2$, and $\mathit{ff}3$. 
The feedback loop with a delay block (to capture the minimum value at the previous time) computes the minimum fitness value over the entire simulation.
\item Line~\ref{alg:return} returns the fitness function.
The fitness value is the value assumed by the signal $\mathit{ff}\_\mathit{Total}$ at the end of the simulation.
\end{itemize}

\changeRead{After the fitness function is generated from the RT it is plugged into the model. The input signals of the Fitness Function are linked to the corresponding signals in the model as detailed in \Cref{fig:exampleModel}.
For example, the input signal $F\_s$ of the RT is connected to the corresponding signal from the output of the Steam Condenser Model and Controller.}{stateflow_model_connectionAnswer}{stateflow_model_connection}

\change{As the test case is executed, the fitness function is evaluated since it is an additional block of the model. The fitness value is the last value assumed by the fitness function during the simulation.}{stateflow_model_connectionAnswerb}{stateflow_model_connection}.

\begin{algorithm}[t]
\footnotesize
\caption{The \RTtoStateflow algorithm.}
\label{alg:RT2Stateflow}
\begin{algorithmic}[1]
\Function{\RTtoStateflow}{\RT}\\
 \codespace \label{alg:ffCreation}\textsc{FF}=\texttt{createFitnessFunction}();\\
 \label{alg:initialization}  \codespace SF=\texttt{createEmptyStateflow}(\textsc{FF});\\
 \codespace\label{alg:initializeRT}\texttt{initializeRT}(\RT,SF);\\
 \label{alg:prepost} \codespace \texttt{createPrePost}(SF);\\
 \label{alg:labelTransitions} \codespace \texttt{createTransitions}(\RT,SF);\\
 \label{alg:addInit} \codespace \texttt{addInit}(SF);\\
 \label{alg:addJunction} \codespace \texttt{addConnectiveJunction}(SF);\\
 \label{alg:addFitness} \codespace  \texttt{addFitness}(\textsc{SF});\\
 \label{alg:addActions} \codespace  \texttt{addActions}(\textsc{SF});\\
 \label{alg:min} \codespace  \texttt{addMinCalculation}(\textsc{FF});\\
 \label{alg:return} \codespace \textbf{return} FF;
\EndFunction\
\end{algorithmic}
\end{algorithm}

\subsection{Supporting RTs with Durations}
\label{sec:dur}
To support requirements using the \duration operator, the procedure from \Cref{alg:RT2Stateflow} is modified as follows. 
\begin{itemize}
    \item Line~\ref{alg:prepost} creates the WT state in addition to states PRC and POA in each of the parallel states that refer to requirements that specify a duration.
For example, the WT state is added to the state machine R3 from \Cref{fig:exampleSF}, since the duration is specified for Requirement 3 from \Cref{fig:exampleRT}. 
\item Line~\ref{alg:labelTransitions} adds two transitions to the WT state from the PRC state and the connective junction. The state machine transitions from the PRC to the WT state when the precondition is satisfied, and from the connective junction to the WT state if the precondition is satisfied when the simulation starts.
For example, the Stateflow from \Cref{fig:exampleSF} transitions from PRC to WT and from the connective junction to the WT when the precondition $\mathit{F\_s}\geq4$ of Requirement~3 from \Cref{fig:exampleRT} is satisfied.
The state machine remains in the WT state for the time specified by the duration value if the precondition remains satisfied.
If the precondition becomes unsatisfied before the duration threshold is reached, the state machine returns to the PRC state.
For example, the Stateflow from \Cref{fig:exampleSF} transitions from WT to PRC when the precondition $\mathit{F\_s}\geq4$ of Requirement~3 from \Cref{fig:exampleRT} is violated.
Otherwise, when the elapsed time (encoded by the $\mathit{et}$ variable) exceeds the duration value, the state machine moves to the POA state.
For example, the Stateflow from \Cref{fig:exampleSF} remains in the WT until the value of the elapsed time variable (\texttt{et}) matches the value $5$ specified by the duration column of the RT from \Cref{fig:exampleRT}.
\end{itemize}

\subsection{Supporting RTs Containing the Previous Operator}
\label{sec:prev}
To support requirements using the \prev operator, the procedure from \Cref{alg:RT2Stateflow} is modified as follows. 
\begin{itemize}
    \item Line~\ref{alg:initialization} analyzes each requirement of the RT and extracts expressions in the form $\prev(s)$, where $s$ is the name of the signal scoped by the \prev operator.
For example, for the RT from \Cref{fig:exampleRT}, it extracts the expression $\prev(F\_s)$.
It then analyzes each expression in the form $\prev(s)$ and adds an input signal to the Stateflow named $\mathit{sPrev}$. 
The signal $s$ is connected with a Simulink delay block.
The output of the delay is then connected to the input $\mathit{sPrev}$ of the Stateflow.
For example, from our RT our translation added an input signal to the Stateflow from \Cref{fig:FitCalculator} named $\mathit{F\_sPrev}$ where $\mathit{F\_s}$ is the name of the signal scoped by the \prev operator. 
The signal $\mathit{F\_s}$ is connected with a Simulink delay block. The output of the delay is then connected to the input $\mathit{F\_sPrev}$ of the Stateflow.
The expression $\prev(\mathit{F\_s})$ of the RT is replaced with the expression 
 $\mathit{F\_sPrev}$. 
\end{itemize}
Then, the translation proceeds as described in \Cref{sec:transl}.

\subsection{Soundness and Completeness}
\label{sec:soundness}
We prove that the fitness function generated from the algorithm described in \Cref{alg:RT2Stateflow} satisfies the properties from \Cref{def:ftprop}. Formally,
\begin{theorem}
\label{sec:soundTheorem}
Let $\mathit{rt}$ be an $\mathit{RT}$ and $\mathit{ff}$ the fitness function generated from $\mathit{rt}$ by the \RTtoStateflow algorithm, then $\mathit{ff}$ satisfies the properties from \Cref{def:ftprop}.
\end{theorem}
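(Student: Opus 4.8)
The plan is to reduce the four properties of \Cref{def:ftprop} to a single \emph{robustness} claim about the arithmetic expression that \Cref{alg:RT2Stateflow} associates with each postcondition, and then to show that the two aggregation steps (the \texttt{min} block over requirements and the delayed-feedback \texttt{min} over time) propagate this claim to the signal $\mathit{ff}\_\mathit{Total}$ exactly as prescribed by the $\min$-based satisfaction degree of \Cref{def:multiple}. Concretely, I would first fix, for each requirement $\requirement_i$, an instantaneous contribution $\rho_i(t)$: it is $+\infty$ at every time $t$ at which the precondition of $\requirement_i$ is not active, and it equals the value of the arithmetic expression obtained from the postcondition of $\requirement_i$ otherwise. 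I then take $\sd(\requirement_i)=\min_t \rho_i(t)$, so that the overall target is to prove $\mathit{ff}\_\mathit{Total}=\min_i\min_t \rho_i(t)=\sd(\requirementTable)$ and that each $\rho_i(t)$ is a faithful quantitative encoding of postcondition (dis)satisfaction.

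The core step — and the main obstacle — is a robustness lemma proved by structural induction on the logical expression of a postcondition, stating that the arithmetic value produced by the translation is strictly positive exactly when the postcondition holds, strictly negative exactly when it is violated, and that its absolute value measures the distance of the current valuation from the boundary between satisfaction and violation. The base case covers atoms $\term_1 \oplus \term_2$ with $\oplus\in\{>,\geq\}$, for which the translation returns $\term_1-\term_2$; here the sign and magnitude claims are immediate, and the remaining relational operators reduce to this case through the Boolean translation (so that, e.g., $\term_1<\term_2$ arises as a negation). The inductive step uses the fact that $\min$ and $\max$ are the correct quantitative analogues of $\&$ and $|$ and that negation flips the sign, which is the standard argument underlying quantitative (robustness) semantics. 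The subtle point I would make explicit is the behavior at the boundary value $0$: because the translation maps $>$ and $\geq$ (respectively $<$ and $\leq$) to the same arithmetic expression, the sign-based characterization is exact only away from $0$, so I would state the satisfaction/violation dichotomy of \Cref{def:ftprop} for strictly nonzero fitness values and treat the measure-zero boundary as the degenerate case.

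Next I would establish the correspondence between the Stateflow control location of the parallel state machine $R_i$ and the activation of the precondition of $\requirement_i$, using the transition structure built in \Cref{alg:RT2Stateflow}. Without durations, $R_i$ is in $\mathit{POA}$ at time $t$ if and only if the precondition holds at $t$ (the connective junction handles the initial instant, and the $\mathit{PRC}\!\to\!\mathit{POA}$ and $\mathit{POA}\!\to\!\mathit{PRC}$ transitions are labelled by the precondition and its negation). With a duration, the additional $\mathit{WT}$ state delays entry into $\mathit{POA}$ until the precondition has held continuously for the duration specified in the RT, as required by the RT semantics, and $\rho_i(t)=+\infty$ while the machine is in $\mathit{PRC}$ or $\mathit{WT}$. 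For the \prev operator I would argue that the delay block feeds the correct previous-step value into the Stateflow input, so the translated expression is evaluated on the intended valuation and the rest of the argument is unchanged. Combining this with the action that sets $\mathit{ff}i=\mathrm{Inf}$ in $\mathit{PRC}$ (and $\mathit{WT}$) and to the translated postcondition value in $\mathit{POA}$ yields $\mathit{ff}i(t)=\rho_i(t)$ at every time $t$.

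Finally I would discharge the four properties by analysing the aggregator. The \texttt{min} block computes $\min_i \mathit{ff}i(t)$ at each time and the delayed feedback loop takes the running minimum, so $\mathit{ff}\_\mathit{Total}=\min_i\min_t \rho_i(t)$, matching \Cref{def:multiple}. Since the $+\infty$ contributions correspond to inactive preconditions and can never be the minimizer of any finite robustness value, $\mathit{ff}\_\mathit{Total}<0$ holds iff some requirement has an active precondition together with a violated postcondition at some time, which is exactly Property~1; its contrapositive, together with the positivity part of the robustness lemma, gives Property~2. For the magnitude claims, the global minimum singles out the single active-precondition postcondition evaluation whose robustness is extremal: when $\mathit{ff}\_\mathit{Total}>0$ it is the smallest positive robustness, i.e.\ the requirement closest to violation, so a larger value means this closest requirement is farther from violation (Property~3); when $\mathit{ff}\_\mathit{Total}<0$ it is the most negative robustness, i.e.\ the deepest violation, so a lower value means the most-violated requirement is farther from satisfaction (Property~4). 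This matches the \emph{first interpretation} selected before \Cref{def:multiple} and completes the plan.
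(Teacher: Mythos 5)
Your proposal takes essentially the same route as the paper's proof sketch: it decomposes the claim into (a) a quantitative-robustness property of the translated postconditions (sign indicates satisfaction/violation, magnitude indicates distance), (b) the correspondence between the $\mathit{POA}$ location of each parallel state machine and the activation of the precondition (including the connective junction, the $\mathit{WT}$ state for durations, and the delay block for $\mathit{prev}$), and (c) the $\min$-aggregation over requirements and over simulation time, and then discharges the four properties of \Cref{def:ftprop} exactly as the paper does. The only differences are refinements rather than deviations: you prove the robustness claim by structural induction where the paper cites the literature on robustness semantics, and you explicitly flag the degenerate boundary case at fitness value $0$ (where $>$ and $\geq$ collapse to the same arithmetic expression), a subtlety the paper's sketch does not address.
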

\begin{proof}[Proof Sketch] Our proof relies on three statements (stmt).

(a) For each postcondition, converting relational expressions in the form $\term_1\oplus \term_2$ with $\oplus \in \{>,\geq\}$ into arithmetic expressions of the form~$\term_1- \term_2$, and replacing the logical operators ``$\&$'', ``$|$'', ``$\neg$'' with the operators \emph{min}, \emph{max}, and ``$-$'' returns a value that~is positive if the postcondition is satisfied and negative otherwise,  the higher the positive fitness value is, the farther the system is from violating the postcondition; and the lower the negative value is, the farther the system is from satisfying the postcondition~\cite{FainekosPappas2009,menghi2019generating,larsen1988modal}. This consideration is consistent with the semantics from \Cref{sec:semantics}.

(b) Our translation ensures that the postcondition is assessed only when the precondition is satisfied: The state machine enters the $\mathit{POA}$ state when the condition specified by the precondition is satisfied (for the specified duration). This consideration is consistent with the semantics from \Cref{sec:semantics}, if the precondition is not satisfied, then the requirement is satisfied, i.e., $max(-\interpretation{\preconditions}_{\indexvariable, \tracesymbol}, \interpretation{\postconditions}_{\indexvariable, \tracesymbol})$ leads to a positive value.

(c) The aggregator component (\Cref{fig:FitConverter}) computes the minimum across the fitness values of the different requirements for the entire duration of the simulation.
This consideration is consistent with the semantics from Definition~\ref{sec:semantics} which defines  $ \interpretation{\requirementTable}_\tracesymbol$ as the minimum value of $\interpretation{\requirementTable}_{\indexvariable, \tracesymbol}$ for every position $\indexvariable$ of the trace.

Therefore, condition~1 of \Cref{def:ftprop} is satisfied: If at least one requirement is violated in at least one time instant, the aggregator returns a negative fitness value (stmt c) associated with the postcondition of the requirement that is violated (stmts a,b).
Condition~2 of \Cref{def:ftprop} is satisfied: If all the requirements are satisfied across the entire simulation, a positive fitness value is computed for all the postconditions (stmt a,b) and the aggregator returns a positive fitness value (stmt c).
Condition~3 of \Cref{def:ftprop} is satisfied:
When the requirements are satisfied, the aggregator returns the minimum across the fitness value of the different requirements (stmt c), that is the value of the fitness closer to a violation.
The farther this requirement from violation, the higher the positive fitness value returned (stmt a,b). 
Condition~4 of \Cref{def:ftprop} is satisfied: 
When the requirements are violated, the aggregator returns the minimum across the fitness value of the different requirements (stmt c), that is the value of the fitness farther from satisfaction.
The closer this requirement is to being satisfied, the higher the lower the fitness value returned (stmt a,b). 
\end{proof}

\begin{theorem}
\label{sec:soundTheoremSemantics}
\change{Let $\mathit{rt}$ be an $\mathit{RT}$ and $\mathit{ff}$ the fitness function generated from $\mathit{rt}$ by the \RTtoStateflow algorithm, then $\mathit{ff}$ computes the quantitative value defined by the semantics of the RT from \Cref{sec:Semantics}.}{theoremSemantics}{small_contribution_1}
\end{theorem}

\begin{proof}[Proof Sketch]
The proof reflects the one presented for Theorem~\ref{sec:soundTheorem}.
\end{proof}

\change{The syntax of Requirements Tables (from \cite{menghi2024completeness}) is reported in \Cref{sec:background}. 
The quantitative semantics is reported in \Cref{sec:Semantics}. 
Our solution is complete: All the operators are considered by our quantitative semantics and our translation.}{algorithm_completenessAnswerb}{algorithm_completeness} \section{Evaluation}
\label{sec:eval}
To evaluate our solution, we considered the following research questions (RQs):
\begin{itemize}[leftmargin=*]
    \item \textbf{RQ1}: How \emph{effective} is our SBST framework in generating failure-revealing test cases? (\Cref{sec:effectiveness})
    \item \textbf{RQ2}: How \emph{efficient} is our SBST framework in finding these test cases? (\Cref{sec:efficiency})
\end{itemize}
We describe our benchmark models and RT (\Cref{sec:benchmark}) and then answer our RQs.
Finally, we reflect on our results and present threats to Validity (\Cref{sec:discussion}).

\subsection{Benchmark and Tool Configuration}
\label{sec:benchmark}
A benchmark for SBST with RT does not exist, since RTs were created only recently (released at the beginning of 2022 \cite{RequirementsTable}).
Therefore, we considered \modelrtcombinations ($24+24+12$) model-RT combinations from three practical examples: the Automatic Transmission Controller, the Cruise Controller, and the Observer Mode Model models.
\change{\Cref{tab:ModelDescription} reports a brief description of these models, as well as some information about their complexity; i.e., the number of input and output signals, and the number of Simulink blocks in the model.  Note that the Cruise Controller Model is developed for the VI-CarRealTime~\cite{vicar} industrial simulator and contains 2714 blocks.}{ModelTable}{model_information}

\begin{table*}[t]
    \centering
    \caption{Description of the three models used in the Evaluation. For each model, we report the number of input (\textbf{\#In}) and output (\textbf{\#Out}) signals, as well as the number of Simulink blocks in the model (\textbf{\#Blocks}) and a brief description.}
    \footnotesize
    \begin{tabular}{l r r r p{12cm}}
        \toprule
        \textbf{Model}  &\textbf{\#In}   &\textbf{\#Out}    &\textbf{\#Blocks}  &\textbf{Description}\\
        \midrule
        AT      & 2 & 2 &  142  & Controller for an Automatic Transmission with 4 gears for a standard automotive powertrain. \\
        CC      & 2 & 4 & 2714  & Cruise Control for a 4-wheel drive Sedan with variable road slope. \\
        OMM     & 2 & 2 &   14  & Observer Mode Model from MathWorks for monitoring the behavior of a black-box system. \\
        \bottomrule
    \end{tabular}
    \label{tab:ModelDescription}
\end{table*}

\emph{Automatic Transmission Controller (AT)}. We considered $24$ model-RT combinations from the AT example~\cite{ARCH14}.
The AT comes from the Applied Verification for Continuous and Hybrid Systems (ARCH) competition~\cite{Arch2024}. 
This competition compares existing tools for testing CPS. 
We selected AT since it has the highest number (ten) of requirements, which are reasonably complex to violate~\cite{Arch2024}.

\begin{table*}[t]
    \centering
    \caption{Description of each version of the models of our benchmark.}
    \footnotesize
    \begin{tabular}{l p{13cm}}
        \toprule
        \textbf{MID}   &\textbf{Description}\\
        \midrule
        AT-v0   &Original model.\\
        AT-v1   &Increased the Engine Torque by 10\%.\\
        AT-v2   &Reduced vehicle inertia by 17\%.\\
        AT-v3   &Reduced torque acting on the Transmission by 25\%.\\
        \midrule
        CC-v6.1 & Earliest version of this model considered in this work.\\
        CC-v6.2 & Increased the smoothing on the output signals.\\
        CC-v7.1 & Lowered the maximum braking torque to avoid excessive negative accelerations.\\
        CC-v7.3 & Added component that periodically resets the integrator in the PID controller.\\
        CC-v7.4 & Added one more step transition to the desired velocity.\\
        CC-v7.5 & Added non-constant slope to the simulation.\\
        \midrule
        OMM-v0 & Original model from MathWorks\\
        OMM-v1 & Added cross contamination of 0.01 gain from input 1 to output 2.\\
        OMM-v2 & Added cross contamination of 0.01 gain from input 1 to output 2 and 0.01 gain from input 2 to output 1.\\
        OMM-v3 & Added cross contamination of 0.01 gain from input 1 to output 2 and 0.1 gain from input 2 to output 1.\\
        \bottomrule
    \end{tabular}
    \label{tab:ATmodels}
\end{table*}

We considered four versions of the AT model (AT-v0, AT-v1, \ldots, AT-v3): the original model (AT-v0) and three versions obtained by applying the changes to its parameters detailed in \Cref{tab:ATmodels} (top part).

We manually designed the RT for the requirements of AT since they are formalized in Signal Temporal Logic (STL)~\cite{maler2004monitoring}.  
We selected five requirements (AT1, AT2, AT6a, AT6b, AT6c) among the ten requirements of AT. 
We did not consider the requirement AT6abc since it is the conjunction of the requirements AT6a, AT6b, and AT6c that we are already adding to the RT.
We also did not consider the requirements AT51, AT52, AT53, and AT54 since they are easier to violate than the other requirements for most of the tools participating in the ARCH competition and contain nested temporal operators, which require modifying the model to encode them into RT. 
\Cref{tab:requirements} (top part) provides a textual description for each requirement.
We generated an RT modeling these five requirements.
Then, we parameterized each requirement. For example, AT1 can be instantiated by considering different values for the maximum vehicle speed (\textbf{SL1}) and the time limit that should be applied (\textbf{TL1}).
We considered six instances (AT-RT0, AT-RT1, \ldots, AT-RT5) of this RT generated by assigning the values from \Cref{tab:ATreqs} to its parameters.
Each row represents one instance of the RT and contains the value assigned to its parameters.

Therefore,  we considered $24$ ($4\times 6$) model-RT combinations in total, each made by a model and RT.
The combination $\langle$AT-v0, AT-RT0$\rangle$ represents the original model and requirement.

\emph{Cruise Controller (CC)}. We considered $24$ model-RT combinations from the CC example~\cite{10.1145/3611643.3613894}.
The CC model represents a model of a cruise controller for a four-wheel vehicle system designed in Simulink tested on the VI-CarRealTime~\cite{vicar} industrial simulator. 
The CC model was designed by producing 21 intermediate model versions and comes with four requirements.

\begin{table*}[t]
    \centering
    \caption{Requirements identifiers and description for our benchmark models. 
    The parameters for the requirements are highlighted in bold.}
        \label{tab:requirements}
        \footnotesize
    \begin{tabular}{l p{12.5cm}}
        \toprule
        \textbf{RqID} &\textbf{Description}\\
        \midrule
        AT1     &The vehicle speed shall be lower than \textbf{SL1} \(\mathit{mph}\) within the first \textbf{TL1} seconds.\\
        AT2     &The engine speed shall be lower than \textbf{RPM2} \(\mathit{rpm}\) within the first \textbf{TL2} seconds.\\
        AT6a    &If the engine speed is lower than \textbf{RPM6a} \(\mathit{rpm}\) within the first 30 seconds, then the vehicle speed shall be lower than \textbf{SL6a} \(\mathit{mph}\) within the first \textbf{TL6a} seconds.\\
        AT6b    &If the engine speed is lower than \textbf{RPM6b} \(\mathit{rpm}\) within the first 30 seconds, then the vehicle speed shall be lower than \textbf{SL6b} \(\mathit{mph}\) within the first \textbf{TL6b} seconds.\\
        AT6c    &If the engine speed is lower than \textbf{RPM6c} \(\mathit{rpm}\) within the first 30 seconds, then the vehicle speed shall be lower than \textbf{SL6c} \(\mathit{mph}\) within the first \textbf{TL6c} seconds.\\
\midrule
        CC-F1   &30 seconds after the desired velocity changes the difference between the desired velocity and actual velocity should reach and stay within \textbf{VT} \(\mathit{km}/h\).\\
        CC-D1   &After a grace period of \textbf{GT} seconds from the last driver input, the longitude acceleration should stay within -3 \(\mathit{m}/s^2\) and 5 \(\mathit{m}/s^2\).\\
        CC-D2   &After a grace period of \textbf{GT} seconds from the last driver input, the jerk should stay within -10 \(\mathit{m}/s^3\) and 10 \(\mathit{m}/s^3\).\\
        CC-D3   &After a grace period of \textbf{GT} seconds from the last driver input, the pitch acceleration should stay within -3 \(\mathit{rad}/s^2\) and 3 \(\mathit{rad}/s^2\).\\
        \midrule
        OMM1 &Output1 should be greater than \textbf{OL} when Input1 is greater than \textbf{IL}.\\
        OMM2 &Output2 should be greater than \textbf{OL} when Input2 is greater than \textbf{IL}.\\
        \bottomrule
    \end{tabular}
\end{table*}

\begin{table*}[t]
    \centering
    \caption{Requirements Tables from our benchmark.
    Each row represents one instance of the RT and contains the value assigned to its parameters.}
    \label{tab:ATreqs}
    \footnotesize
    \begin{tabular}{l  r r r r r r r r r r r r r}
        \toprule
        \textbf{ReqID} &\multicolumn{13}{c}{\textbf{Requirements Parameter}}\\
        \midrule
        & \makecell{SL1} & TL1 &  RPM2 &  TL2  &  RPM6a & SL6a & TL6a &  RPM6b   & SL6b & TL6b  &  RPM6c   & SL6c & TL6c \\
AT-RT0 & 120 & 20 & 4750 & 10 & 300 & 35 & 4 & 3000 & 50 & 8 & 300 & 65 & 20 \\ 
        AT-RT1 & 115 & 20 & 4750 & 10 & 2900 & 32 & 4 & 3000  & 50 & 8 & 3000 & 65 & 20 \\
        AT-RT2  & 115 & 20 & 4800 & 10 & 3000 & 35 & 4 & 2900 & 52 & 8 & 3000 & 65 & 20   \\
        AT-RT3  & 125 & 25 & 4800 & 10 & 3000 & 35 & 4 & 3000 & 50 & 8 & 2900 & 67 & 20 \\
        AT-RT4  & 125 & 25 & 4750 & 8 & 2900 & 32 & 4 & 2900 & 52 & 8 & 2900 & 67 & 20 \\
        AT-RT5 & 125 & 25 & 4750 & 8 & 3000 & 35 & 5 & 3000 & 50 & 10 & 3000 & 65 & 22\\ 
        \midrule
        & GT & VT \\
CC-RT0 & 0.5 & 3\\
        CC-RT1 & 0.7 & 3\\
        CC-RT2 & 0.7 & 4\\
        CC-RT3 & 0.7 &5\\
        \midrule
                &IL  &OL\\
        OMM-RT0 & 0  & 0\\
        OMM-RT1 & 0  & -0.5\\
        OMM-RT2 & -0.5  & 0\\
    \bottomrule
    \end{tabular}
    \label{tab:CCreqs}
\end{table*}

We considered six of the last versions of the CC model described in \Cref{tab:ATmodels}  (middle part).
According to the authors, versions CC-v6.1, CC-v6.2, and CC-v7.1 contain failures, while versions CC-v7.3, CC-v7.4, and CC-v7.5 do not.

We manually designed the RT for the requirements (CC-F1, CC-D1, CC-D2, CC-D3) of CC since they are specified using Test Assessment blocks from \SLTest~\cite{TestAssessment}.
\Cref{tab:requirements} (middle part) provides a textual description for each requirement.
We generated an RT modeling these four requirements.
As for the AT model, we parameterized each requirement. 
For example, CC-F1 can be instantiated by considering different values for the velocity tolerance (\textbf{VT}).
We then considered four instances (CC-RT0, CC-RT1, CC-RT2, CC-RT3) of this RT reported in \Cref{tab:CCreqs} (middle part) generated by assigning values to its parameters.

Therefore, we considered $24$ ($6\times4$) model-RT combinations obtained by combining each model and RT.
The original model-RT combinations are $\langle$CC-v6.1, CC-RT0$\rangle$, $\langle$CC-v6.2, CC-RT0$\rangle$, $\langle$CC-v7.1, CC-RT0$\rangle$, $\langle$CC-v7.3, CC-RT0$\rangle$, $\langle$CC-v7.4, CC-RT0$\rangle$, and $\langle$CC-v7.5, CC-RT0$\rangle$.

\emph{Observer Mode Model (OMM)}. We considered $12$ model-RT combinations from a tutorial example provided by MathWorks~\cite{RequirementsTable}.
We use the RT that is observing the model behavior without executing actions that consists of two requirements (OMM1 and OMM2).
The requirements from the RT are satisfied by the model.

We considered four versions (OMM-v0,$\ldots$,OMM-v3) of the OMM model: the original model and three versions obtained by applying changes detailed in \Cref{tab:ATmodels} (bottom part) to its parameters. 

We consider the RT provided by the example and parameterize each requirement of the RT: We added a threshold value for the minimum values the variables y1 and y2 can assume.
We considered three instances of this RT (OMM-RT0, OMM-RT1, and OMM-RT2) detailed in \Cref{tab:CCreqs} (bottom part)  generated by assigning values to its parameters.

Therefore,  we considered  $12$ ($4\times3$) model-RT combinations in total.

\emph{Tool Configuration}. The tool configurations for AT and OMM correspond to the ones used by ARIsTEO \cite{Aristeo} and ATheNA \cite{formica2022search} in the last edition of the ARCH competition. 
\changeRead{The time budget was set to 1500 iterations, with Simulated Annealing \cite{Abbas2014} as the search algorithm, and each run of the experiment was repeated $10$ times.}{search_methodAnswerb}{search_method}
On the other hand, the CC model has a much higher running time, so performing a single run with a limit of 1500 iterations could take more than 20 hours.
For this reason, we used the same testing condition as in \cite{10.1145/3611643.3613894} and use 20 iterations as the time budget.
The remaining parameters of the search algorithm are set to the default values used by S-Taliro (and HECATE).
\Cref{tab:staliroparam} lists all the parameters for Simulated Annealing and their values.

\begin{table}[t]
    \centering
    \caption{Values assigned to the configuration parameters of the search algorithms used by Hecate.}
    \label{tab:staliroparam}
    \begin{tabular}{p{5.5cm} p{2.5cm}}
        \toprule
        \textbf{Parameter}   &\textbf{Value}\\
        \midrule
        Optimization solver.    &SA\\
        Number of runs.         &$10$\\
        Max. iterations per run.       &$1500$ (AT, OMM)\\
                                                    &$20$ (CC)\\
Gradient Descent+SA    &False\\
Maximum Acceptance Ratio.        &$0.55$\\
        Minimum Acceptance Ratio.        &$0.45$\\
        Initial continuous beta parameter.   &$-15$\\
        Continuous beta adaptation parameter.   &$50\%$\\
        Initial discrete beta parameter.    &$-15$\\
        Discrete beta adaptation parameter. &$50\%$\\
        Initial displacement ratio for max step size.   &$0.75$\\
        Displacement adaptation parameter.  &$10\%$\\
        Upper Bound on displacement ratio. &$0.99$\\
        Lower Bound on displacement ratio. &$0.01$\\
Cost function evaluations before updating acceptance criteria.   &$50$\\
\bottomrule
    \end{tabular}
\end{table}

To run our SBST framework we had to design Parameterized Test Sequences for the AT and OMM models.
The CC model already has its own Parameterized Test Sequences, so we reuse them.
The Test Sequences for AT and OMM are designed by considering the original shape and range of the input signals.
For AT, we considered the range and function shape of Instance 2 of the ARCH competition.
Therefore, the two input signals are modeled as piecewise constant signals with a single discontinuity.
The inputs for OMM were originally constant signals, so we used a piecewise constant signal with a single discontinuity.
In total, the Parameterized Test Sequences used respectively five search parameters for AT and five for OMM.
The number of CC parameters is four for versions CC-v6.1, CC-v6.2, CC-v7.1, and CC-v7.3, six for version CC-v7.4, and nine for version CC-v7.5.
\Cref{tab:HecateParam} lists the names and ranges of our parameters.
Our SBST framework searches for values of these parameters that cause a violation of the requirements of the RTs, i.e., they return a failure-revealing test case.

\begin{table}
    \centering
    \caption{Search parameters names and ranges for the Test Sequences of the AT, CC, and OMM models.}
    \label{tab:HecateParam}
    \begin{tabular}{r l l}
        \toprule
        \textbf{Model}  &\textbf{Name}  &\textbf{Range}\\
        \midrule
        \multirow{5}{*}{AT}   &Hecate\_throttle1   &$[5,100]\%$\\
            &Hecate\_brake1         &$[0,325]\mathit{lb}\cdot \mathit{ft}$\\
            &Hecate\_throttle2      &$[5,100]\%$\\
            &Hecate\_brake2         &$[0,325]\mathit{lb}\cdot \mathit{ft}$\\
            &Hecate\_trans          &$[0,35]s$\\
        \midrule
        \multirow{9}{*}{CC}     &Hecate\_Transition1   &$[30,40]s$\\
            &Hecate\_Transition2$^{\ast,\dagger}$   &$[0,20]s$\\
            &Hecate\_desVel1        &$[120,150]km/h$\\
            &Hecate\_desVel2        &$[120,150]km/h$\\
            &Hecate\_desVel3$^{\ast,\dagger}$        &$[120,150]km/h$\\
            &Hecate\_slope          &$[-4,4]^{\circ}$\\
            &Hecate\_verShift$^{\dagger}$       &$[-1,1]^{\circ}$\\
            &Hecate\_period$^{\dagger}$         &$[30,200]s$\\
            &Hecate\_horShift$^{\dagger}$       &$[0,\pi]$\\
        \midrule
        \multirow{5}{*}{OMM}    &Hecate\_u1   &$[-2,5]$\\
            &Hecate\_u2             &$[-2,5]$\\
            &Hecate\_u3             &$[-2,5]$\\
            &Hecate\_u4             &$[-2,5]$\\
            &Hecate\_time           &$[2,7]s$\\
        \bottomrule
    \end{tabular}\\
    $^{\ast}$ Parameters added for model version CC-v7.4 \\ 
        $^{\dagger}$ Parameters added for model version  CC-v7.5.
\end{table}

\subsection{RQ1 --- Effectiveness}
\label{sec:effectiveness}
To assess the effectiveness of our SBST framework, we analyze its capability of generating failure-revealing test cases. 
We consider each of the \modelrtcombinations model-RT combinations from our benchmark, run our SBST tool, verify whether our tool returns a failure-revealing test case, and reflect on our results. 
Note that since no tool supports SBST of RTs, we do not have any baseline for our assessment.

Recall that each of our experiment run was repeated $10$ times.
\Cref{tab:RQ1results} analyzes the requirements violated by the failure-revealing test cases. 
The table is split horizontally into three parts containing the results for the AT, CC, and OMM examples.
The rows of the table report the versions of the RT. 
The columns are the versions of the model.
The column representing each version of the model is split into two columns.
We recall that each RT has several requirements (see \Cref{sec:benchmark}).
The columns labeled with ``F'', and ``Req'' respectively report if at least one of the requirements of the RT is violated (\cmark) or not (\xmark) in one of the $10$ runs, and the list of the requirements that are violated in at least one of the runs by the failure-revealing test cases.

In the following, we discuss the results obtained for each model.

\subsubsection{AT Model}
\Cref{tab:RQ1results} shows that our SBST framework could generate at least a failure-revealing test case for $\approx 96\%$ ($23$ out of $24$) of the model-RT combinations.
The tool could not find a failure-revealing test case for $\langle$AT-v3, AT-RT3$\rangle$ in any of the runs.
For this model-RT combination, our solution tried to violate requirement AT2 and failed due to the lowered torque capacity of model version AT-v3, i.e.,  the optimization algorithm entered a local minimum and did not reach the global minimum that was showing the violation of the requirement AT6c (as shown in \Cref{tab:RQ1resultsUR}).

In total, we found 219 failure-revealing test cases out of $240$ runs ($24$ model-RT combinations $\times 10$ runs).
We ran all the failure-revealing test cases through the appropriate Requirements Table and confirmed that each of these test cases violated at least one of the requirements of the RT.

The requirements AT1, AT2, AT6a, AT6b, AT6c are respectively violated in
$\approx 17\%$ ($38$ out of $219$),
$\approx 49\%$ ($107$ out of $219$),
$\approx 24\%$ ($53$ out of $219$),
$\approx  9\%$ ($20$ out of $219$), and 
$\approx 27\%$ ($59$ out of $219$)
of the failure-revealing runs.
Note that a failure-revealing test case can simultaneously violate more than one requirement.
Therefore, the sum of all the requirement violations is higher than $100\%$ (or $219$ failure-revealing runs).
The requirement AT2 is the one that is violated the most ($\approx 49\%$ of the times).
This result confirms that the proposed approach could expose diverse failures and is consistent with the results reported in the ARCH 2024 report: For the requirements encoded in our RT, this requirement requires the least number of iterations for being violated for most of the tools participating in the competition (seven out of eight).

\begin{table*}[t]
    \centering
    \caption{Results of our SBST framework for the AT, CC, and OMM models. For each model and RT version, it is reported if at least one of the requirements is violated (\cmark) or not (\xmark),
    and all the requirements that are violated.}
    \footnotesize
    \begin{tabular}{l | r p{1.45cm} | r  p{1.45cm} | r  p{1.45cm} | r  p{1.45cm} | r  p{1.45cm} | r  p{1.45cm} }
        \toprule
        \multirow{1}{1.3cm}{\textbf{RT}}    &$F$
        &$\mathit{Req}$ &$F$    &$\mathit{Req}$ &$F$    &$\mathit{Req}$ &$F$    &$\mathit{Req}$ &$F$    &$\mathit{Req}$ &$F$    &$\mathit{Req}$\\
        \midrule
        &\multicolumn{2}{c|}{AT-v0}  &\multicolumn{2}{c|}{AT-v1}  &\multicolumn{2}{c|}{AT-v2}  &\multicolumn{2}{c|}{AT-v3}\\
        \midrule
        AT-RT0      & \cmark    & AT2, AT6a, AT6b, AT6c         & \cmark    & AT2, AT6a, AT6b, AT6c         & \cmark    & AT1, AT2, AT6c        & \cmark    & AT2 \\
        AT-RT1      & \cmark    & AT1, AT2, AT6a, AT6b, AT6c    & \cmark    & AT1, AT2, AT6a, AT6b, AT6c    & \cmark    & AT1, AT2, AT6a        & \cmark    & AT2 \\
        AT-RT2      & \cmark    & AT1, AT6a, AT6b, AT6c         & \cmark    & AT1, AT2, AT6a, AT6c          & \cmark    & AT1, AT6a, AT6b, AT6c & \cmark    & AT6c \\
        AT-RT3      & \cmark    & AT1, AT6a, AT6b, AT6c         & \cmark    & AT1, AT2, AT6a, AT6b, AT6c    & \cmark    & AT1, AT6a, AT6b, AT6c & \xmark    & -- \\
        AT-RT4      & \cmark    & AT2, AT6a, AT6c               & \cmark    & AT1, AT2, AT6a, AT6c          & \cmark    & AT1, AT2, AT6a        & \cmark    & AT6a, AT6c \\
        AT-RT5      & \cmark    & AT1, AT2, AT6a, AT6b, AT6c    & \cmark    & AT1, AT2, AT6a, AT6b, AT6c    & \cmark    & AT1, AT2, AT6a        & \cmark    & AT6a, AT6b, AT6c \\
        \midrule
        &\multicolumn{2}{c|}{CC-v6.1}  &\multicolumn{2}{c|}{CC-v6.2}  &\multicolumn{2}{c|}{CC-v7.1}  &\multicolumn{2}{c|}{CC-v7.3} &\multicolumn{2}{c|}{CC-v7.4}  &\multicolumn{2}{c}{CC-v7.5}\\
        \midrule
        CC-RT0      & \cmark    & F1    & \cmark    & F1    & \cmark    & F1    & \cmark    & F1    & \cmark    & F1    & \cmark    & F1 \\
        CC-RT1      & \cmark    & F1    & \cmark    & F1    & \cmark    & F1    & \cmark    & F1    & \cmark    & F1    & \cmark    & F1 \\
        CC-RT2      & \cmark    & F1    & \cmark    & F1    & \cmark    & F1    & \cmark    & F1    & \cmark    & F1    & \cmark    & F1 \\
        CC-RT3      & \xmark    & --    & \xmark    & --    & \xmark    & --    & \cmark    & F1    & \cmark    & F1    & \cmark    & F1 \\
        \midrule
        &\multicolumn{2}{c|}{OMM-v0}  &\multicolumn{2}{c|}{OMM-v1}  &\multicolumn{2}{c|}{OMM-v2}  &\multicolumn{2}{c|}{OMM-v3}\\
        \midrule
        OMM-RT0     & \xmark    & --            & \cmark    & OMM2          & \cmark    & OMM1, OMM2    & \cmark    & OMM1, OMM2 \\
        OMM-RT1     & \xmark    & --            & \xmark    & --            & \xmark    & --            & \xmark    & -- \\
        OMM-RT2     & \cmark    & OMM1, OMM2    & \cmark    & OMM1, OMM2    & \cmark    & OMM1, OMM2    & \cmark    & OMM1, OMM2 \\
        \bottomrule
    \end{tabular}
    \label{tab:RQ1results}
\end{table*}

\begin{table*}[t]
    \centering
    \caption{Failure Rate (FR), Average ($\bar{S}$) and Median ($\widetilde{S}$) number of iterations needed to find the failure-revealing test case for each model-RT combination.}
    \footnotesize
    \begin{tabular}{l | r r c | r r c | r r c | r r c | r r c | r r c }
        \toprule
        \multirow{1}{1.3cm}{\textbf{RT}}    &\textbf{$FR$}  &\textbf{$\bar{S}$} &\textbf{$\widetilde{S}$}   &\textbf{$FR$}  &\textbf{$\bar{S}$}        &\textbf{$\widetilde{S}$}   &\textbf{$FR$}   &\textbf{$\bar{S}$}     &\textbf{$\widetilde{S}$}   &\textbf{$FR$}      &\textbf{$\bar{S}$}    &\textbf{$\widetilde{S}$}   &\textbf{$FR$}   &\textbf{$\bar{S}$}     &\textbf{$\widetilde{S}$}   &\textbf{$FR$}     &\textbf{$\bar{S}$}    &\textbf{$\widetilde{S}$} \\
        \midrule
        &\multicolumn{3}{c|}{AT-v0}  &\multicolumn{3}{c|}{AT-v1}  &\multicolumn{3}{c|}{AT-v2}  &\multicolumn{3}{c|}{AT-v3}\\
        \midrule
        AT-RT0      & 10    & 16.2  & 11    & 10    & 15.7  & 13    & 10    & 14.7  & 14    & 10    & 34.1  & 24  \\
        AT-RT1      & 10    & 18.2  & 13    & 10    & 10.1  & 6     & 10    & 7.4   & 6     & 10    & 34.1  & 24  \\
        AT-RT2      & 10    & 82.5  & 48    & 10    & 17.5  & 11    & 10    & 31.6  & 10    & 8     & 611.4 & 558 \\
        AT-RT3      & 10    & 112.0 & 81    & 10    & 22.9  & 18    & 10    & 35.8  & 33    & 0     & --    & --  \\
        AT-RT4      & 10    & 16.4  & 11    & 10    & 23.8  & 22    & 10    & 11.4  & 8     & 2     & 491.0 & 491 \\
        AT-RT5      & 10    & 16.2  & 9     & 10    & 29.2  & 20    & 10    & 11.4  & 8     & 9     & 509.8 & 443 \\
        \midrule
        &\multicolumn{3}{c|}{CC-v6.1}  &\multicolumn{3}{c|}{CC-v6.2}  &\multicolumn{3}{c|}{CC-v7.1}  &\multicolumn{3}{c|}{CC-v7.3} &\multicolumn{3}{c|}{CC-v7.4}  &\multicolumn{3}{c}{CC-v7.5}\\
         \midrule
        CC-RT0      & 5     & 5.6   & 4     & 5     & 5.6   & 4     & 6     & 9.7   & 10    & 10    & 4.4   & 2     & 9     & 5.0   & 4     & 9     & 7.0   & 8 \\
        CC-RT1      & 5     & 5.6   & 4     & 5     & 5.6   & 4     & 6     & 9.7   & 10    & 10    & 4.3   & 1     & 9     & 5.3   & 4     & 9     & 7.1   & 8 \\
        CC-RT2      & 2     & 14.0  & 14    & 2     & 14.0  & 14    & 2     & 7.5   & 7     & 7     & 3.7   & 1     & 9     & 4.9   & 4     & 5     & 11.8  & 12 \\
        CC-RT3      & 0     & --    & --    & 0     & --    & --    & 0     & --    & --    & 5     & 4.6   & 5     & 3     & 9.0   & 11    & 3     & 14.3  & 17 \\
        \midrule
        &\multicolumn{3}{c|}{OMM-v0}  &\multicolumn{3}{c|}{OMM-v1}  &\multicolumn{3}{c|}{OMM-v2}  &\multicolumn{3}{c|}{OMM-v3}\\
         \midrule
        OMM-RT0     & 0     & --    & --    & 10    & 68.5  & 41    & 10    & 35.9  & 29    & 10    & 34.4   & 12 \\
        OMM-RT1     & 0     & --    & --    & 0     & --    & --    & 0     & --    & --    & 0     & --     & -- \\
        OMM-RT2     & 10    & 7.6   & 7     & 10    & 7.6   & 7     & 10    & 7.6   & 7     & 10    & 7.0    & 5 \\
        \bottomrule
    \end{tabular}
    \label{tab:RQ2results}
\end{table*}

\begin{figure*}
    \centering
    \hfill
    \begin{subfigure}{0.25\textwidth}
        \includegraphics[width=\textwidth]{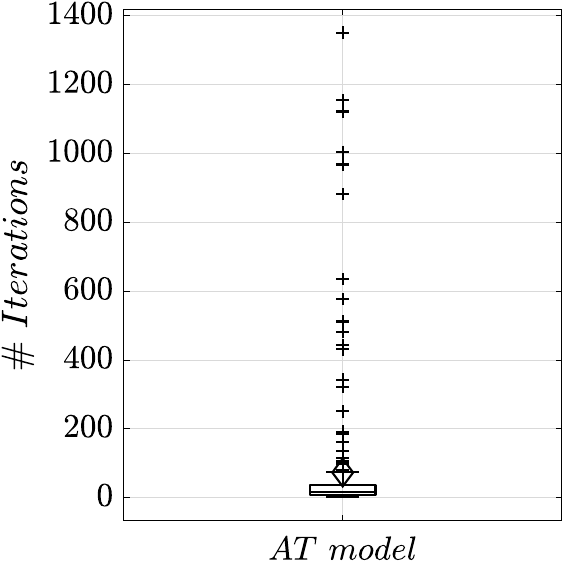}
\end{subfigure}
    \hfill
    \begin{subfigure}{0.25\textwidth}
        \includegraphics[width=\textwidth]{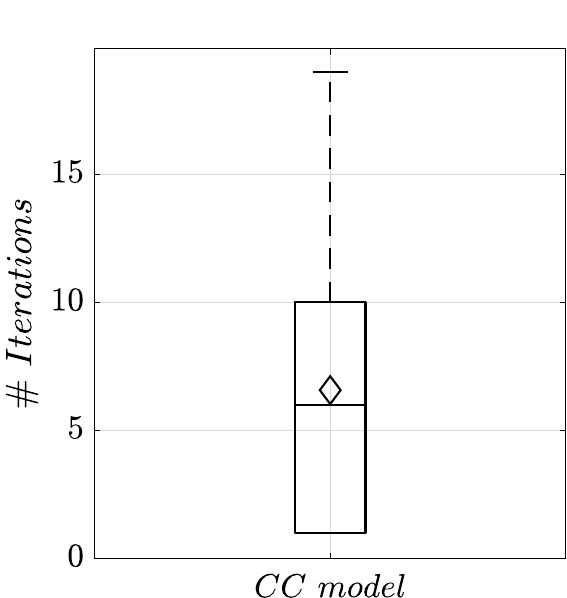}
\end{subfigure}
    \hfill
    \begin{subfigure}{0.25\textwidth}
        \includegraphics[width=\textwidth]{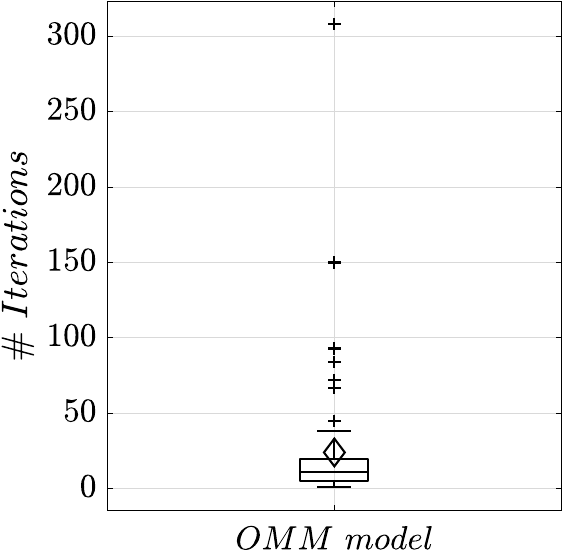}
\end{subfigure}
    \hfill
    \caption{Number of iterations for the failure-revealing runs. Diamond depicts the average.}
    \label{fig:boxplots}
\end{figure*}

\subsubsection{CC Model}
Our SBST framework could generate at least a failure-revealing test case for $\approx 88\%$ ($21$ out of $24$) of the model-RT combinations.
All the failure-revealing test cases were verified using the appropriate RT checking that at least one of the requirements of the RT was violated.

In total, we found 126 failure-revealing test cases out of $240$ runs ($24$ model-RT combinations $\times 10$ runs).
The combinations where our solution could not produce a failure-revealing test case in 10 runs were $\langle$CC-v6.1, CC-RT3$\rangle$, $\langle$CC-v6.2, CC-RT3$\rangle$, and $\langle$CC-v7.1, CC-RT3$\rangle$.
For version CC-RT3 of the RT, we increase the grace period (GT) and the vehicle speed (VT) thresholds. This change likely makes the requirements impossible to be violated in the model versions CC-v6.1, CC-v6.2, and CC-v7.1.

Considering the requirements (RT0) and models (CC-v6.1, CC-v6.2, CC-v7.1, CC-v7.3, CC-v7.4, CC-v7.5) reported in the original publication~\cite{10.1145/3611643.3613894}, our tool could return a failure-revealing test cases for all the model-RT combinations.
\emph{Surprisingly these model-RT combinations include} $\langle$CC-v7.3, CC-RT0$\rangle$, $\langle$CC-v7.4, CC-RT0$\rangle$, \emph{and} $\langle$CC-v7.5, CC-RT0$\rangle$ \emph{for which the existing tool could not previously return any failure-revealing test case~\cite{10.1145/3611643.3613894}}. 
We tried one test case of these three model-RT combinations by performing a Hardware-in-the-Loop experiment with an industrial simulator.
Videos about these testing scenarios are available in the Replication package.
These experiments confirm the findings of the tool: While the drivability requirements are satisfied, the functional requirement is violated in all the experiments.

The requirements CC-F1, CC-D1, CC-D2, and CC-D3 are, respectively, violated in $100\%$ (126 out of 126), $0\%$ (0 out of 126), $0\%$ (0 out of 126), and $0\%$ (0 out of 126) of the failure-revealing runs.
Unlike the original publication~\cite{10.1145/3611643.3613894}, which showed the violation of both the functional and drivability requirements, for all our cases, when the requirement was violated, our SBST framework reported that the violation was in the functional requirement CC-F1.
Unfortunately, setting up VI-Grade~\cite{VIgrade}, the driving simulator used by the VI-CarRealTime simulator, on a new laptop required recreating the vehicle model, which may consider slightly different physical characteristics (e.g., its mass) and may lead to a discrepancy in the test results.

\subsubsection{OMM model} 
Our SBST framework generated at least a failure-revealing test case for $\approx 58\%$ ($7$ out of $12$) of the model-RT combinations.
We checked the failure-revealing test cases against the appropriate RT, which confirmed that each of these test cases violated at least one of the requirements of the RT.

In total, we found 70 failure-revealing test cases out of $120$ runs ($12$ model-RT combinations $\times 10$ runs).
For the requirements combinations $\langle$OMM-v0, OMM-RT0$\rangle$, $\langle$OMM-v0, OMM-RT1$\rangle$, $\langle$OMM-v1, OMM-RT1$\rangle$, $\langle$OMM-v2, OMM-RT1$\rangle$, and $\langle$OMM-v3, OMM-RT1$\rangle$ our solution did not reveal any violation of the requirements of the RT.
This is not surprising since for these combinations, if the input signals satisfy the preconditions of the requirements, then the output signals will always satisfy the postconditions.
For example, for the combination $\langle$OMM-v3, OMM-RT1$\rangle$ the model ensures that if the first input signal is greater than $0$ (precondition of OMM1), then the first output signal will always be in the range $(-0.5,10.2]$.
This ensures the requirement OMM1 is always satisfied as the first output signal can't be lower than the threshold value of $-0.5$ and is always lower than $10.2$ (postcondition of OMM1), making the violation of the requirement impossible.

The requirements OMM1 and OMM2 are respectively violated in
$\approx 27\%$ ($19$ out of $70$), and 
$\approx 73\%$ ($51$ out of $70$)
of the failure-revealing runs.
This result confirms that the proposed approach could expose diverse failures.

\subsubsection{Summary and Answer}
In summary, our SBST framework returned at least a failure-revealing test case for \rqonetotal (51=23+21+7 out of \modelrtcombinations) of our model-RT combinations.
This result is comparable to the results reported by the last edition of the ARCH competition:
83\% for ARIsTEO,
88\% for ATheNA,
83\% for EXAM-Net,
75\% for FalCAuN~\cite{Waga20},
100\% for ForeSee~\cite{falsQBRobCAV2021},
100\% for FReaK~\cite{FReaK},
33\% for Moolight~\cite{MoonlightTool},
100\% for NNFal~\cite{kundu2024data},
79\% for OD~\cite{STGEM},
and 88\% for $\Psi$-Taliro~\cite{psytalirotool}.

\begin{Assumption}[RQ1 --- Answer]
Our SBST framework returned a failure-revealing test case for \rqonetotal of the model-RT combinations.
Remarkably, three model-RT combinations refer to a CC example that other tools already tested without finding any requirement violation.
\end{Assumption}

\subsection{RQ2 --- Efficiency}
\label{sec:efficiency}
To assess the efficiency of our solution we recorded the number of iterations required to generate the failure-revealing test cases since (a)~this is the main metric used in the ARCH competition, and (b) each iteration of our tool takes constant time (with small perturbations caused by the testing platform). 
For example, one iteration of our solution for the AT, CC, and OMM model takes respectively $2.8s$, $46.7s$, and $1.1s$.
Therefore, the total execution time is (approximately) the product between the number of iterations and the execution time of one iteration.

\Cref{tab:RQ2results} presents our results.
The table is split horizontally into three parts containing the results for the AT, CC, and OMM examples.
For each model-RT combination, the table reports the falsification rate (FR),  average ($\bar{S}$), and median ($\tilde{S}$) number of iterations across the $10$ runs.
Empty cells indicate that none of the runs could detect a failure-revealing test case for that specific RT.

\subsubsection{AT Model} 
For the $219$ failure-revealing test case produced by our SBST, in $89\%$ ($194$ out of $219$) of the cases it was computed in less than 100 iterations.
For $10\%$ ($21$ out of $219$) of the cases, the failure-revealing test cases were computed in over 100 iterations but less than 1000.
For $2\%$ ($4$ out of $219$) of the cases, the failure-revealing test cases were computed in more than 1000 iterations.
The model-RT combinations that used more than 1000 iterations are $\langle$AT-v3, AT-RT2$\rangle$ (in 2 runs out of 10) and $\langle$AT-v3, AT-RT5$\rangle$ (in 2 runs out of 10).
This is reasonable because the model version AT-v3 can exert less torque than the original version, which makes it harder to violate requirements with a vehicle speed threshold.
Furthermore, AT-RT2 and AT-RT5 respectively increase the threshold value for AT2 and reduce the time horizon for AT2, which is the requirement that was violated the most in the other combinations.
Our solution requires the highest number of iterations (avg=285.7) for AT-v3 and the lowest for AT-v1 (avg=19.9) and AT-v2 (avg=18.7).
This result is expected: The change applied to AT-v3  lowered the torque on the transmission making it harder to break the speed limits, while the changes applied to AT-v1 and AT-v2 made it easier.

Comparing the result of the combination $\langle$AT-v0, AT-RT0$\rangle$, which is made from the original model and requirement, with the results reported in the ARCH2024 report, the average number of iterations required by our solution ($16.2$) is lower than the one from ATheNA ($62.3$ for AT2), EXAM-Net ($22.7$ for AT2), FalCAuN ($256$ for AT2), ForeSee ($115.2$ for AT6a), and OD ($18.5$ for AT2) but higher than ARIsTEO ($15.1$ for AT2), FReaK ($2.2$ for AT2), NNFal ($1.5$ for AT1), and $\Psi$-Taliro ($13.7$ for AT2). 
Therefore, the number of iterations is higher than four of the competitors and lower than five.
Compared with tools that rely on the same search technology (ATheNA, ARIsTEO, and $\Psi$-Taliro),
our tool was faster than ATheNA requiring $46.1$ iterations less, and required only $1.1$ iterations more than ARIsTEO and $2.5$ more than Psy-Taliro.
We conjecture that this performance improvement depends on the number of search parameters which reduces the size of the search space: Our test sequences rely on 5 parameters (see \Cref{sec:benchmark}), while ATheNA, ARIsTEO, and S-Taliro use 10 control points.

\subsubsection{CC Model} 
For the $126$ failure-revealing test case produced by our SBST, in $76\%$ ($96$ out of $126$) of the cases, it was computed in less than 10 iterations. 
For the $24\%$ ($30$ out of $126$) of the test cases, the SBST used more than 10 iterations and less than 20 (maximum number of iterations).

Comparing the results related to the model-RT combinations ($\langle$CC-v6.1, CC-RT0$\rangle$, $\langle$CC-v6.2, CC-RT0$\rangle$, $\langle$CC-v7.1, CC-RT0$\rangle$) reported in the original CC paper, our solution is slightly more efficient: It required on average respectively $2.4$ and $4.3$ iterations less for $\langle$CC-v6.1, CC-RT0$\rangle$ and $\langle$CC-v7.1, CC-RT0$\rangle$ and $3.6$ iterations more for $\langle$CC-v6.2, CC-RT0$\rangle$.
However, these differences are negligible in practice since running $4$ iterations requires approximately $180$ seconds.

\subsubsection{OMM Model}
For the $70$ failure-revealing test case produced by our SBST, in $97\%$ ($68$ out of $70$) of the cases, it was computed in less than 100 iterations.
For $3\%$ ($2$ out of $70$) of the cases, the failure-revealing test cases were computed in over 100 iterations but less than 1000.

We could not compare our results with previous work: These combinations were never considered by any other SBST framework.

\subsubsection{Summary and Answer}
The boxplots in \Cref{fig:boxplots} represent the number of iterations required to generate the failure-revealing test cases for each model, for all its model-RT combinations and failure-revealing runs.
Our SBST solution required  on average 
73.4 (\textit{min}=1,  \textit{max}=1351,
\textit{StdDev}=195.5), 
6.6 (\textit{min}=1,    \textit{max}=19,
\textit{StdDev}=5.5), 
24.1 (\textit{min}=1,   \textit{max}=308,
\textit{StdDev}=43.6), iterations respectively for the AT, CC, OMM.
The average is lower than 100 iterations for all of our models.
These results confirm the efficiency of our tool.
Note that a high-standard deviation is expected: \Cref{tab:RQ2results} shows that, for each model, the complexity of identifying failure-revealing test cases changes across the different versions of the model (e.g., AT-v0, AT-v1, AT-v2, and AT-v3 for AT) and RT (e.g., AT-RT0, AT-RT1, AT-RT2, AT-RT3, AT-RT4, and AT-RT5 for AT).

\begin{Assumption}[RQ2 --- Answer]
On average, our SBST framework required $73.4$, $6.6,$ and $24.1$ iterations respectively for the AT, CC, and OMM model.
For model-requirement combinations that were analyzed by other tools from the literature, existing tools show comparable performance.
\end{Assumption}

 \subsection{Reflections and Threats to Validity}
\label{sec:discussion}
\Cref{sec:relevance} discusses the relevance of our contribution and its industrial applicability.
\Cref{sec:threats} presents threats to validity.
\Cref{sec:comparison} reflects the comparison of our solution with existing ones from the literature.

\subsubsection{Relevance of the Contribution}
\label{sec:relevance}
Developing effective SBST solutions for CPS is a relevant software engineering problem~\cite{papadakis2019mutation,5210118}. 
Providing support for tabular expressions (e.g., those specified as RT) is \emph{important} since they have been used in industrial practice~\cite{crow1998formalizing,10.1145/2038642.2038676,menghi2024completeness}.
Therefore, our approach will significantly impact software engineering practices by enabling the automatic generation of test cases from RT.
Note that RTs enable engineers to use Boolean expressions to specify their pre- and post-conditions.
Boolean logic is significantly less expressive than temporal logic. Unlike Boolean logic, temporal logic formulas can nest temporal operators, making the formula more complex.
Our contribution is a test-case generation framework that directly supports RTs. 
This contribution is relevant because RTs are already part of the Simulink framework. Therefore, our solution will not require engineers to learn any new formal language.

Our approach is driven by translating the RT into a set of parallel Finite State Machines that perform runtime monitoring of requirements and robustness computation.
This translation is explicitly designed to support the testing activity and is integrated with existing search techniques. 
Our solution relies on well-established search-based testing techniques, a recommended software engineering practice.
Unlike other existing techniques (e.g., \cite{Panichella_2018}) our technique is driven by the system requirements expressed as RT and defined in the context of Simulink.
Other solutions are not tailored, designed, or applicable to this problem. 
In summary, our contribution is \emph{original} since we are the first to propose a black-box SBST approach for RT.

Other work from the literature inspired our solution.
A recent work that combines multiple requirements of a CPS into a single conjunctive requirement~\cite{peltomaki2022falsification} inspired our interpretation of RT with multiple requirements (\Cref{def:multiple}).
The robustness semantics of STL~\cite{FainekosPappas2009}, the quantitative semantics of RFOL~\cite{menghi2019generating}, and multi-valued semantics of logic formalisms (e.g.,~\cite{larsen1988modal}) inspired the replacement of the logical operators of the Boolean expression using $min$ and $max$ operators.
The approaches for generating automated and online test oracles for Simulink models~\cite{menghi2019generating} and supporting SBST of Test Sequence and Test Assessment Blocks~\cite{Hecate2024} inspired our aggregator component.
\changeRead{To summarize, instead of developing a new solution from scratch, we reused and adapted these solutions to solve our problem, a recommended software engineering practice.
Instead of Stateflow, we could have used other target languages (e.g., logic) for our translation.
We decided to use  Stateflow since it is the standard tool for implementing state machines in Simulink.
Our evaluation confirms the effectiveness of this decision.}{other_transformationsAnswer}{other_transformations}
Finally, we implemented our solution as a plugin for HECATE, which makes our solution integrated with existing tools and more easily testable and maintainable.

The objective of our technique is to find a failure-revealing test case. 
Therefore, it stops when at least one requirement is violated (see~\Cref{sec:hecate}).
Technically, our tool can be configured to prevent our solution from stopping at the first requirement violation and terminate its execution when the maximum number of iterations is used.
This can enable the tool to find violations of other requirements.

\changeRead{
Like other works, which lifted the robustness semantics of timed automata~\cite{gupta1997robust,henzinger2000robust} to temporal logics (e.g.,~\cite{FainekosPappas2009,fainekos2009robustness}) and aspect-oriented design (e.g.,~\cite{ali2012modeling,ali2013assessing}), we lifted the robustness semantics to RT.
Like other works that translated formal specifications into test oracles (e.g.,~\cite{9108630,DBLP:journals/tse/PadghamZTM13,DBLP:conf/icse/HeCHWPY19,DBLP:conf/icse/MotwaniB19,menghi2019generating}), we translated RT into test oracles.
The technical challenges addressed by our work are comparable to existing works from the research literature, despite being unique: RT is significantly different than other formalisms in terms of language constructs and expressivity.}{small_contribution_1Answer}{small_contribution_1}\change{}{small_contribution_2Answer}{small_contribution_2}

\change{There are several technical difficulties our solution faces. We summarized the technical difficulties for each of the contributions reported in \Cref{sec:intro}.
For the selection of the target tool (Stateflow Charts) to be used for our translation (Contribution~\textbf{C1}), we faced several challenges.
We had to find a formalism that could
(a)~express and encode the features provided by RT, since formalisms with limited expressiveness might not be able to encode all the requirements,
(b)~enable a ``clean'' and ``simple'' translation since this would enable proving its soundness easily,
(c)~check the requirements as the simulation is running, and 
(d)~be integrated within the \simulink platform with an easy installation process. 
This last point can be significant to foster the adoption of the approach in the industry, since many companies have strict guidelines on the approval of external software.
The selection of Stateflow Charts was the outcome of a long process in which we considered several alternative solutions and assessed their pros and cons. 
We selected Stateflow Charts since they satisfy all these desired criteria.
For the definition of the quantitative semantics of Requirements Tables (Contribution~\textbf{C2}),
it was the outcome of a meticulous and accurate analysis of the semantics of the RTs~\cite{menghi2024completeness}. During this process, we had to define a quantitative measure for the satisfaction of the different constructs of the RT and its requirements. 
Additionally, as argued in \Cref{sec:hecate}, we had to select an interpretation for the well-formedness properties when multiple requirements are associated with a quantitative measure. 
For the definition of the translation that converts RTs into fitness functions (Contribution~\textbf{C3}), we had to overcome significant challenges and corner cases, such as defining procedures to handle (a)~multiple requirements simultaneously, each with its precondition, postcondition, and possible duration, (b)~the \prev and the \duration operators, which are not available in Stateflow Charts (\Cref{sec:prev}), (c)~prove the soundness of our translation. 
For the definition of the implementation of our solution (Contribution~\textbf{C4}), we had to integrate our approach within HECATE.
This required us to develop a script that programmatically reads RT and converts them in Stateflow Charts.
We encountered several challenges. 
For example, in the evaluation phase, we initially thought that the problem we were facing when deploying our experiments on Compute Canada was caused by our translation. 
Only after a careful debugging activity, we realized that the Cedar cluster
uses a Linux distribution incompatible with Simulink 2022a,
it is unable to openSimulink for 2022b,
2023a and 2023b contain a bug that prevents us from reading RTs input and output signals (faulty implementation of the function ``slreq.modeling.RequirementsTable/findSymbol''), and
2024b requires at least 48Gb of memory to execute the 10 runs (possibly due to a memory leak).
For the benchmark definition (Contribution~\textbf{C5}), we first analyzed existing research literature to verify the existence of any alternative benchmark that could be used in our evaluation, but found none.
Therefore, we considered existing benchmarks from the research literature, looking for models that contain multiple requirements that could be formalised using the requirements of the Requirements Tables.
We found three practical examples we could use for our evaluation: Two containing specifications in signal-temporal logic (STL), and one containing specifications expressed as Requirements Tables. 
Once we found these examples, we had to define our benchmark. We defined \modelrtcombinations model-RT combinations by formalizing different Requirements Tables for these models.
For the empirical evaluation (Contribution~\textbf{C6}), we had to identify a baseline to compare our results, define the configuration of the tools, and the experimental setup.
For each model, we critically analyzed our results, and for one of our models also performed hardware-in-the-loop testing.
Finally, we had to rigorously analyze our experimental set up and configuration settings to identify threats to validity, discuss the industrial applicability of our solution (Contribution~\textbf{C7}).
}{TechnicalAnswer}{tech_difficulties_stateflow} \change{}{tech_difficulties_stateflow_3}{small_contribution_3}
    
\delete{Running the experiments on the Compute Canada platform was hindered by some technical difficulties. 
Out of the five versions (2022a, 2022b, 2023a, 2023b, 2024b) of Matlab available on the platform: 
2022a is not fully compatible with the Linux distribution used on the Cedar cluster and cannot save the Simulink cache files,
2022b is not installed correctly on the Cedar cluster and is unable to open Simulink,
2023a and 2023b contain a bug that prevents us from reading Requirements Tables input and output signals (faulty implementation of the function ``slreq.modeling.RequirementsTable/findSymbol''),
and 2024b requires at least 48Gb of memory to execute the 10 runs (possibly due to a memory leak).}{RedundantParagraph}{tech_difficulties_stateflow}

Our contribution is \emph{sound}. Theorem~\ref{sec:soundTheorem} shows that our fitness function satisfies the properties from \Cref{def:ftprop}.
We also verified that the fitness value returned by our solution for our test cases matched the verdicts produced by the original RT: We implemented a script that checked that whenever the fitness value is positive all the requirements of the RT are satisfied, and if it is negative at least one requirement of the RT is violated. 
The script relies on the built-in functions of Matlab/Simulink to assess the satisfaction of the requirements.
These built-in functions do not provide a quantitative result, i.e., they do not provide a ``degree'' of satisfaction (robustness).
Our scripts did not reveal any problem in our solution.

We \emph{evaluated} our solution by analyzing its effectiveness and efficiency by critically examining our results. We compared the results of model-RT combinations from the literature with the results reported in the corresponding publication.
We executed 10 runs of our experiments as mandated by the ARCH competition~\cite{Arch2024}.
We shared our models, data, and tool supporting the transparency, reproducibility, and replicability of our solution.

\subsubsection{Threats to Validity}
\label{sec:threats}
Some experimental decisions threaten the external and internal validity of our results.

\emph{External Validity.} 
The selection of the model-RT combinations of our benchmark could threaten the external validity of our results.
We had to define our benchmark for the evaluation (an additional contribution made by this work) since there is no publicly available benchmark we could reuse and software companies typically do not share their Simulink models and RTs. 
\changeRead{Our benchmark is comprehensive: it includes models from different sources (ARCH Competition~\cite{ARCH14}, a recent publication~\cite{10.1145/3611643.3613894}, and the Mathworks website~\cite{RequirementsTable}).
Note that the system under tests \emph{were not manually created} and they have a clear purpose described in the corresponding publications.
The requirements are expressed using different formalisms (STL, Test Assessments, and RTs).
We converted the requirements expressed as the STL formulae and Test Assessments into RT.
We also tried to use other RTs from the  MathWorks documentation which we could not use: (a)~\cite{TutRT2, TutRT5, TutRT8, TutRT9} did not include any post-condition in the RT making the violation of the requirements not possible;
(b)~\cite{TutRT3, ReqTableDocumentationDuration, TutRT7} provided the RTs but no system model, and (c)~\cite{TutRT6} provided no information on the structure of the input signals, preventing the generation of test cases.
The number, size, and type of models considered in this work are consistent with similar works from this domain (e.g.,~\cite{Aristeo,Hecate2024})}{model_informationAnswer}{model_information}.

\emph{Internal Validity.} 
\changeRead{The values selected for the configuration parameters of our SBST solution could threaten the internal validity of our results. 
However, we mitigated this threat by considering the same values selected for the ARCH competition or in the original model publication.}{unfocused_evalAnswer}{unfocused_eval}

\subsubsection{Comparison with other Tools}
\label{sec:comparison}
We discuss tools that use different specification languages and search procedures.

\emph{Specification Languages}.
\changeRead{Our tool is designed to support requirements expressed as RT. RT are significantly different than other specifications tools. 
Unlike RT, STL can not specify the relation between the value of a signal in a time instant and its previous value since the atomic propositions of STL refer to a single time instant.
For example, STL can not force the speed to increase by 10 every second~\cite{menghi2024completeness}.
\change{In this sense, Tabular Requirements are more similar to the hybrid logic of signals (HLS - reference~\cite{menghi2021trace}), which enables the expression of these requirements.}{weak_motivationAnswerb}{weak_motivation}
Complementarily, STL can express specifications that involve complex temporal relations between events that RT can not express without introducing additional signals into the model.}{weak_motivationAnswer}{weak_motivation}
RTs are also significantly different than other formalisms (e.g., Test Assessments) since they are designed to support requirements analysis and design. 
\change{For example, unlike Test Assessment blocks, which define the behavior of a state machine that operatively describes a test oracle by defining which conditions must be checked and when, Requirements Tables provide a declarative description of the requirements of the system.
Technically, Test Assessment blocks are represented as state machines with transitions labeled with operators (e.g., after) that define how the state machine moves from one state to another. Based on this difference, our contribution is large and significant.
Considering a recent survey on the topic~\cite{barr2014oracle},
the Test Assessments can be classified as ``Specificed Test Oracles'' of type ``State Transition Systems'', while the oracle produced by this work are from a completely different category: They can be classified as ``Derived Test Oracles'', from ``Textual Documentation'' since Requirements Tables can be considered as a form of ``Restricting Natural Language''.}{small_contribution_1AnswerRTAns}{small_contribution_1}

In our evaluation, we considered requirements from the ARCH competition since a benchmark for SBST with RT does not exist and this competition provided us with a valuable benchmark of requirements.
We did not consider all the requirements and models available.
Specifically, we did not consider requirements with complex nested temporal operators (AT51, AT52, AT53, AT54, AFC27, NN, NNx, CC2, CC3, CC4, CC5) that would have required us to add additional signals to our model and therefore introduce biases in our evaluation.
We also did not consider models with a single requirement (F16, SC, PM) since they would produce Requirements Tables with a single row.
After removing these requirements only the AT model, with the requirements AT1, AT2, AT6a, AT6b, AT6c, AT6abc, the CC model, with requirements CC1, CCx, and the AFC model, with requirements AFC29 and AFC33 were still available for our evaluation.
AFC was excluded since AFC29 and AFC33 must be tested with different input ranges; therefore, they are equivalent to a model with a single requirement.
CC was excluded since the requirements CC1 and CCx have the same structure and lead to a simple fitness function. 

This paper does not argue against using formal specifications and Test Assessments and existing SBST frameworks that support these languages.
Our goal is not to compare the usability of RT and other modeling languages (e.g., STL or Test Assessments), and we are not claiming that users may prefer using RT rather than other specification languages. 
Our objective is to extend the support of existing tools (i.e., HECATE~\cite{Hecate2024}) to support requirements expressed in a tabular format (i.e., RT). 
Therefore, a comparison between our solution and these frameworks is unjustified. 

Simulink Design Verifier (SLDV) is the only tool that can generate test cases from an RT~\cite{TestCaseSDV}.
However, a comparison with this tool is not performed since it can only be used for the OMM model.
SLDV does not support some Simulink blocks used by the AT and CC models (e.g., the Integrator block).

\emph{Testing-strategies}. 
Our contribution is the translation of RT into a fitness function. 
This translation can drive different search-based procedures. 
\changeRead{Currently, our implementation supports extended ant colony optimization~\cite{AntColony}, genetic algorithms~\cite{GeneticAlgorithm}, uniform random~\cite{UniformRandom}, cross-entropy method~\cite{CrossEntropy}, and several other search methods which are reused from S-Taliro.
The implementation of these procedures is not a contribution to this work.}{search_methodAnswerC}{search_method}
Therefore, comparing different search-based strategies (e.g., Simulated Annealing --- SA --- with Uniform Random --- UR) is out of scope since they are not part of our contribution.
Furthermore, an extensive comparison of SBST techniques can not be based on the results obtained from three models and should rely on a larger set of models to have results that are statistically significant~\cite{Arcuri_2012}.
However, to show that our fitness function can effectively drive the search procedure, we performed a qualitative comparison between our solution configured with UR and SA algorithms. 
Note that our goal is to show that our fitness function can effectively drive the search procedure; It is not to show that SA is more (or less) effective than UR.

We consider the same model-RT combinations used in our evaluation and run the experiments from RQ1 and RQ2 using UR.
Table~\ref{tab:RQ1resultsUR} reports in how many out of the $10$ runs at least one of the requirements was violated (column ``FR''), and the requirement violated most frequently (column ``Req'') by UR.
Refer to \Cref{tab:RQ1results} and \Cref{tab:RQ2results} for the corresponding results using the SA search algorithm.
\Cref{fig:boxplotsUR} contains the number of iterations required for the three models to compute the failure-revealing runs for UR. For convenience, we also report the results obtained by SA in RQ2.
Surprisingly, for two models (AT and CC) over three, UR was more effective than SA.
It could find more failure-revealing test cases with fewer iterations.
From \Cref{sec:efficiency}, SA required  on average respectively
73.4 (\textit{min}=1,  \textit{max}=1351,
\textit{StdDev}=195.5), 
6.6 (\textit{min}=1,    \textit{max}=19,
\textit{StdDev}=5.5), and
24.1 (\textit{min}=1,   \textit{max}=308,
\textit{StdDev}=43.6) iterations for the AT, CC, and OMM models.
UR algorithm required on average respectively
25.9 (\textit{min}=1,  \textit{max}=453,
\textit{StdDev}=60.3), 
5.3 (\textit{min}=1,    \textit{max}=19,
\textit{StdDev}=4.7), and
39.0 (\textit{min}=1,   \textit{max}=268,
\textit{StdDev}=69.1) for the AT, CC, and OMM models outperforming (on average) SA by 47.5 iterations for the AT model and 1.3 iterations for the CC model.
SA outperformed UR by 14.9 iterations for the OMM model.

\begin{table*}[t]
    \centering
    \caption{Results of our SBST framework for the AT, CC, and OMM models using the \textbf{Uniform Random} (UR) algorithm. For each model and RT version, it is reported in how many runs (out of 10) at least one of the requirements is violated (FR), and which requirement is violated more frequently (Req).}
    \footnotesize
\begin{tabular}{p{1.4cm} | p{0.1cm} p{0.75cm} | p{0.1cm} p{0.75cm} | p{0.1cm} p{0.75cm} | p{0.1cm} p{0.75cm} | p{0.1cm} p{0.75cm} |p{0.1cm} p{0.75cm} }
        \toprule
        \textbf{RT}     &\textbf{FR}    &\textbf{Req}   &\textbf{FR}  &\textbf{Req}   &\textbf{FR}    &\textbf{Req}   &\textbf{FR}  &\textbf{Req}   &\textbf{FR}    &\textbf{Req}   &\textbf{FR}  &\textbf{Req}\\
        \midrule
        &\multicolumn{2}{c|}{AT-v0}  &\multicolumn{2}{c|}{AT-v1}  &\multicolumn{2}{c|}{AT-v2}  &\multicolumn{2}{c|}{AT-v3}\\
        \midrule
        AT-RT0      & 10    & AT2               & 10    & AT2   & 10    & AT2   & 10    & AT2 \\
        AT-RT1      & 10    & AT2               & 10    & AT2   & 10    & AT2   & 10    & AT2 \\
        AT-RT2      & 10    & AT6c              & 10    & AT6c  & 10    & AT6c  & 10    & AT6c \\
        AT-RT3      & 10    & AT1, AT6a, AT6b   & 10    & AT2   & 10    & AT6a  & 10    & AT6c \\
        AT-RT4      & 10    & AT2               & 10    & AT2   & 10    & AT2   & 10    & AT6c\\
        AT-RT5      & 10    & AT2               & 10    & AT2   & 10    & AT2   & 10    & AT6a\\
        \midrule
        &\multicolumn{2}{c|}{CC-v6.1}  &\multicolumn{2}{c|}{CC-v6.2}  &\multicolumn{2}{c|}{CC-v7.1}  &\multicolumn{2}{c|}{CC-v7.3} &\multicolumn{2}{c|}{CC-v7.4}  &\multicolumn{2}{c}{CC-v7.5}\\
        \midrule
        CC-RT0      & 7     & F1    & 7     & F1    & 10    & F1    & 10    & F1    & 10    & F1    & 10    & F1 \\
        CC-RT1      & 7     & F1    & 7     & F1    & 10    & F1    & 10    & F1    & 10    & F1    & 10    & F1 \\
        CC-RT2      & 1     & F1    & 1     & F1    & 3     & F1    & 10    & F1    & 10    & F1    & 9     & F1 \\
        CC-RT3      & 0     & --    & 0     & --    & 1     & F1    & 7     & F1    & 7     & F1    & 6     & F1\\
        \midrule
        &\multicolumn{2}{c|}{OMM-v0}  &\multicolumn{2}{c|}{OMM-v1}  &\multicolumn{2}{c|}{OMM-v2}  &\multicolumn{2}{c}{OMM-v3}\\
        \midrule
        OMM-RT0     & 0     & --    & 10    & OMM2  & 10    & OMM2  & 10    & OMM1 \\
        OMM-RT1     & 0     & --    & 0     & --    & 0     & --    & 0     & -- \\
        OMM-RT2     & 10    & OMM1  & 10    & OMM1  & 10    & OMM1  & 10    & OMM1 \\
        \bottomrule
    \end{tabular}
    \label{tab:RQ1resultsUR}
\end{table*}

\begin{figure*}
    \centering
    \hfill
    \begin{subfigure}{0.3\textwidth}
        \includegraphics[width=\textwidth]{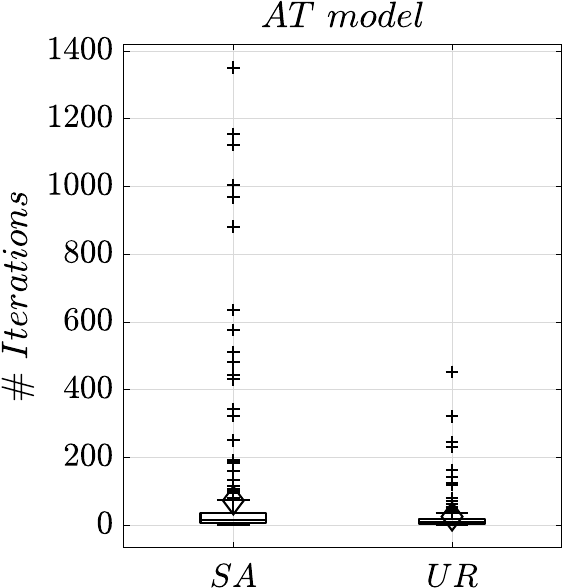}
\end{subfigure}
    \hfill
    \begin{subfigure}{0.3\textwidth}
        \includegraphics[width=\textwidth]{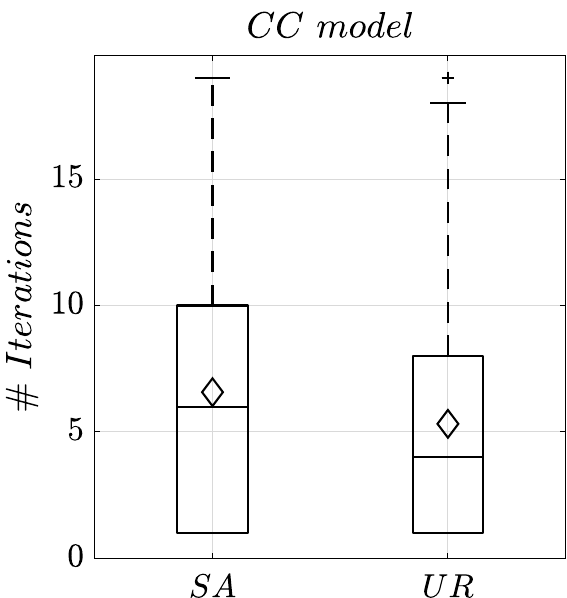}
\end{subfigure}
    \hfill
    \begin{subfigure}{0.3\textwidth}
        \includegraphics[width=\textwidth]{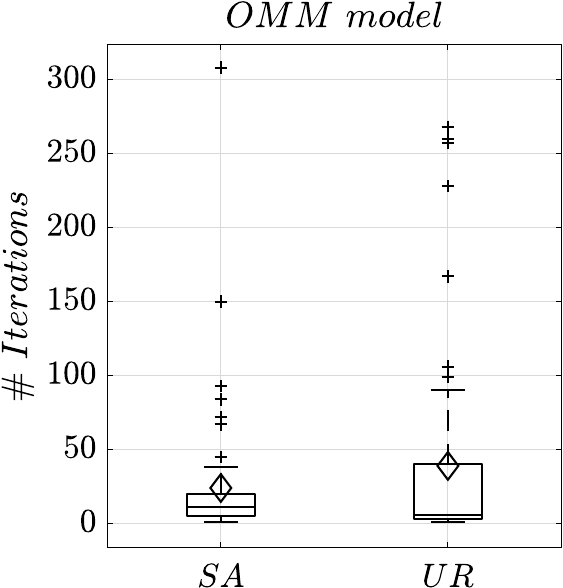}
\end{subfigure}
    \hfill
    \caption{Number of iterations for the failure-revealing runs for both SA and UR. Diamond depicts the average.}
    \label{fig:boxplotsUR}
\end{figure*}

We investigated why UR for AT and CC performed better than SA. 
We visually inspected the behaviors of the algorithms by analyzing the test cases iteratively generated by the two algorithms.
We recorded a video that iteratively plots the inputs generated by the test sequences generated by the test case generation algorithm and the corresponding outputs.
The videos and the list of all the test cases generated by the two algorithms are part of our replication package.
The videos show that the failure-revealing test cases contain profiles with high throttle values for AT.
Indeed, for high throttle values, it is likely that the vehicle (AT1) and the engine (AT2) speed exceed their thresholds (SL1 and RPM2).
For CC, many failure-revealing test cases have the slope signal close to the range boundaries.
This can cause delays in reaching the desired velocity and produce a violation of the Functionality requirement (F1).
Recall that the ranges for the parameters of the Test Sequences for AT and CC are respectively $[5,100]\%$ (Hecate\_throttle1 and Hecate\_throttle2) and $[-4,4]^{\circ}$ (Hecate\_slope).
Based on our qualitative analysis, and a reverse engineering activity we performed on the model, values within the range $[90,100]\%$ for the throttle of AT and $[-4,-2.5]^{\circ}\cup [2.5,4]^{\circ}$ for the slope of CC are likely to lead to a failure-revealing test case.
There is a relatively high chance that UR will generate values within these ranges in a few iterations.
Note that the ARCH competition includes models with different complexity for the generation of failure-revealing test cases (see the corresponding report and the failure-revealing rates of the tools across the different benchmark models) and the effectiveness of UR for the AT model is also confirmed by the last report from the ARCH competition~\cite{Arch2024}.

Therefore, to show the effectiveness of our solution, we restricted the ranges for the input parameters for one model-requirement combination for AT and one for CC.
We kept the search parameters (number of runs and hyperparameters for the two search algorithms) the same. The intuition is that the search-based procedure is useful when the failure is difficult to find; contrarily randomly generating inputs is also an effective approach (and can be more effective than guided search)~\cite{arcuri2011random,iqbal2012empirical}.

For the AT model, we chose the $\langle$AT-v0, AT-RT0$\rangle$ model-requirement combination and we limited the brake signal to the range $[0,150]lb\cdot ft$ and the throttle to the range $[50,85]\%$ and $[50,90]\%$ respectively for \emph{Hecate\_throttle1} and \emph{Hecate\_throttle2}.
The Failure Rate for the UR algorithm was 5 runs out of 10 (changing from 10 runs out of 10 with the original input range), while for the SA algorithm was 10 runs out of 10 (remarkably it remained the same as the one with the original input range).
Furthermore, the average number of iterations in the failure-revealing runs for SA and UR were respectively
147.5 
(min=28,
max=351,
StdDev=95.7)
and 402.8 
(min=109,
max=746,
StdDev=252.6).
This shows how for a more challenging problem, the effectiveness of UR dropped by $50\%$, while the one for SA remained stable, and the number of iterations required increased by over 34 times for UR, while increasing only by 9 times for SA.

For the CC model, we chose the $\langle$CC-v6.2, AT-RT0$\rangle$ model-requirement combination.
We limited the slope signal to the range $[-2.5, 2.5]\mathit{deg}$ and we extended the range for the desired velocity to $[100, 150]\frac{km}{h}$.
We also increased the maximum number of iterations per run from 20 to 50, to give more room for the search algorithms to converge to a solution.
The Failure Rate for the SA algorithm was 8 runs out of 10 (changing from 5 out of 10 with the original input range), while the UR algorithm remained at 7 runs out of 10.
The average number of iterations for SA and UR were respectively
17.1
(min=1,
max=39,
StdDev=13.5)
and 21.3
(min=12,
max=36,
StdDev=8.4).
Therefore, the SA algorithm could find one more failure-revealing test case and it requirede on average 4.2 iterations less than the UR algorithm.

This result shows that our fitness function can effectively drive the
search procedure, i.e., when the failure is difficult to find, it can drive the search toward the failure-revealing test cases.

Further evidence supporting the capability of the fitness function in driving the search is provided by the difference in the set of requirements violations identified by SA and UR. 
For example, the $\langle$AT-v2, AT-RT2$\rangle$ model-RT combination has a Failure Rate of 10 out of 10 runs for both SA and UR, but it achieves this by targeting different requirements.
In this experiment, SA produced a violation of AT1, AT2, AT6a, AT6b, and AT6c respectively in 5 out of 10 runs, 0 out of runs, 5 out of 10 runs, 1 out of 10 runs, and 4 out of 10 runs; while UR produced a violation in 3 out of 10 runs, 0 out of 10 runs, 4 out of 10 runs, 1 out of 10 runs, and 7 out of 10 runs.
Therefore, SA most frequently finds violations in requirements AT1 and AT6a, while UR most frequently finds violations for AT6c.
This shows how the two search algorithms can produce failure-revealing test cases that expose different criticalities in the models under analysis.
To summarize, although comparing the search algorithms is out of scope, these experiments demonstrate that UR and SA are complementary, and our fitness function can effectively drive the search process.

 \section{Related Works}
\label{sec:related}
Our contribution is an SBST approach for Simulink models that supports RT.
Therefore, we consider related works that generate test cases (a)~for Simulink models, (b)~from requirements, and (c)~from tabular requirements specifications since they are the most related to our research.
Our goal is not to present an extensive review of the literature on SBST; The interested reader can consult existing surveys on the topic (e.g., \cite{Panichella_2018,Anand_2013,McMinn_2011,Harman_2012}).

\emph{Simulink Models}.
Many SBST techniques for Simulink models~\cite{Aristeo,formica2022search,Breach,Waga20,FalSTar,falsQBRobCAV2021,FReaK,MoonlightTool,kundu2024data,STaliro,psytalirotool,STGEM} consider requirements specified using logic-based formulae, including approaches for generating automated and online test oracles~\cite{menghi2019generating}.
Several solutions required engineers to manually define fitness functions~\cite{arrieta2019search, Abdessalem_Automotive_2018, Matinnejad_2014, Matinnejad_2016}, or rely on surrogate functions~\cite{nejati2023reflections, Matinnejad_2014, Beglerovic_2017, Wang_2022}.
For example, HECATE~\cite{Hecate2024} computes the fitness from Test Assessment blocks. Unlike these solutions, we target requirements expressed as RT which are significantly different from other artifacts. 

\emph{Requirements}. Many approaches (e.g.,~\cite{STaliro, Breach, FalSTar, FReaK, formica2022search}) generate test cases from requirements.
Recent surveys~\cite{mustafa2021automated, rodrigues2024systematic, Farooq_Tahreem_2022} show that most of the existing techniques focus on UML-based artifacts (e.g.,~\cite{Abdurazik_1999, Bandyopadhyay_2009, Rocha_2021}) and natural language specifications (e.g.,~\cite{Ahsan_2017, Wang_2015, Elghondakly_2015, Carvalho_2013}) confirming the lack of techniques that can generate test cases from requirements expressed in a tabular format.

\emph{Tabular Requirements}. Different techniques~\cite{5975175, peters1998using, 4344109, liu2001generating, clermont2005using} generate test cases from tabular expressions.
Unlike these techniques, our solution targets Simulink models, which are significantly different from other software artifacts as they support signals over time. Some works specifically target the conversion between Stateflow Charts and Function Tables \cite{Singh_2015, Stephen_2019}. 
Our approach is different as it uses the requirement specifications for automatic test cases generation. \section{Conclusion}
\label{sec:conclusion}
This work presented the first SBST approach for Simulink models driven by RT, a tool for specifying software requirements in a tabular format.
We evaluated our solution by considering \modelrtcombinations model-RT combinations.
Our SBST framework returned a failure-revealing test case for \rqonetotal of the combinations. Remarkably, three of these combinations come from a cruise controller model of an industrial simulator that was already analyzed by other tools without finding any violation.
The efficiency of our SBST solution is comparable with other existing SBST tools not based on RT. 
\section*{Data Availability}
The proposed solution, the models and the results are available at \cite{ReplicationPackage}. The replication package will be made publicly available and assigned a DOI upon acceptance.
The CC model cannot be shared in its entirety because it uses Simulink blocks from the VI-CarRealTime library.

\section*{Acknowledgment}
This work was supported in part by the European Union - Next Generation EU. ``Sustainable Mobility Center (Centro Nazionale per la Mobilit\`a Sostenibile - CNMS)'', M4C2 - Investment 1.4, Project Code CN\_00000023, and by project SERICS (PE00000014) under the NRRP MUR program funded by the EU - NGEU.
We acknowledge the support of the Natural Sciences and Engineering Research Council of Canada (NSERC) [funding reference numbers RGPIN-2022-04622, DGECR-2022-0040].\newline
This research was enabled in part by support provided by Compute Ontario (www.computeontario.ca) and Compute Canada (www.computecanada.ca).
 We thank Eduardo Barbosa Louback (McMaster University) for his help in setting up the Hardware-in-the-Loop experiments.

\bibliographystyle{IEEEtran}

\end{document}